\newtheorem{observation}{Observation}
\newcommand{\msize}[1]{{\left|#1\right|}}
\newcommand{\PP}{{\sf P}}
\newcommand{\MPQ}{{\sf MPQ}}
\newcommand{\calI}{{\cal I}}
\newcommand{\calT}{{\cal T}}
\newcommand{\bfI}{{\mathbf I}}
\newcommand{\fmap}[2]{f_{#1}(#2)}
\newcommand{\fallmap}{g}
\newcommand{\LCA}{{\sf LCA}}
\newcommand{\calS}{{\cal S}}
\newenvironment{listing}[1]{%
        \begin{list}{*}{%
                 \settowidth{\labelwidth}{#1}%
                 \setlength{\leftmargin}{\labelwidth}%
                 \advance \leftmargin by 12pt
                   \setlength{\itemsep}{0pt}%
                   \setlength{\parsep}{0pt}%
                   \setlength{\topsep}{0pt}%
                   \setlength{\parskip}{0pt}%
}%
}{%
\end{list}}
\begin{document}
\title{Shortest Reconfiguration of Sliding Tokens on a Caterpillar}

\author{Takeshi Yamada\inst{1} \and Ryuhei Uehara\inst{1}}

\institute{School of Information Science, JAIST, Japan.\\
    \email{\{tyama,uehara\}@jaist.ac.jp}}

\maketitle

\begin{abstract}
Suppose that we are given two independent sets $\bfI_b$ and $\bfI_r$ of 
a graph such that $\msize{\bfI_b}=\msize{\bfI_r}$, 
and imagine that a token is placed on each vertex in $\bfI_b$. 
Then, the {\sc sliding token} problem is to determine whether 
there exists a sequence of independent sets which transforms $\bfI_b$ 
into $\bfI_r$ so that each independent set in the sequence results from 
the previous one by sliding exactly one token along an edge in the graph. 
The {\sc sliding token} problem is one of the reconfiguration problems
that attract the attention from the viewpoint of theoretical computer science.
The reconfiguration problems tend to be PSPACE-complete in general,
and some polynomial time algorithms are shown in restricted cases.
Recently, the problems that aim at finding a shortest reconfiguration sequence are investigated.
For the 3SAT problem, a trichotomy for the complexity of finding the shortest sequence has been shown; 
that is, it is in P, NP-complete, or PSPACE-complete in certain conditions.
In general, even if it is polynomial time solvable to decide whether two instances are reconfigured with each other, 
it can be NP-complete to find a shortest sequence between them.
Namely, finding a shortest sequence between two independent sets 
can be more difficult than the decision problem of reconfigurability between them.
In this paper, we show that the problem for finding a shortest sequence between two independent sets 
is polynomial time solvable for some graph classes which are subclasses of the class of interval graphs.
%which is a well-investigated subclass of perfect graphs.
%of %the class of perfect graphs.
More precisely, we can find a shortest sequence between two independent sets on a graph $G$ 
in polynomial time if either $G$ is a proper interval graph, a trivially perfect graph, or a caterpillar.
As far as the authors know, this is the first polynomial time algorithm for the {\sc shortest sliding token}
 problem for a graph class that requires detours.
%
% (2) the problem is linear time solvable for trivially perfect graphs and caterpillars.
% %in $O(n)$ time for  and trivially perfect graphs, 
% %both of which are subclasses of the class of perfect graphs, 
% %where $n$ is the number of vertices in a graph. 
% For these graph classes, 
%  we can find an actual sequence of independent sets between $\bfI_b$ and $\bfI_r$ 
%  in polynomial time. 
% For proper interval graphs and trivially perfect graphs, we can find optimal sequences
%  that require the minimum number of token-sliding.\
\end{abstract}

\section{Introduction}

Recently, the {\em reconfiguration problems} attract the attention 
from the viewpoint of theoretical computer science.
The problem arises when we wish to find a step-by-step transformation between 
two feasible solutions of a problem such that 
all intermediate results are also feasible and 
each step abides by a fixed reconfiguration rule, that is, 
an adjacency relation defined on feasible solutions of the original problem.
The reconfiguration problems have been studied extensively for several well-known problems, including 
{\sc independent set}~\cite{HearnDemaine2005,HearnDemaine2009,IDHPSUU,KaminskiMedvedevMilanic2012,MNRSS13}, 
{\sc satisfiability}~\cite{Kolaitis,MTY11}, 
{\sc set cover}, {\sc clique}, {\sc matching}~\cite{IDHPSUU}, 
{\sc vertex-coloring}~\cite{BJLPP11,BC09,CHJ11}, {\sc shortest path}~\cite{KMP11}, and so on.

\begin{figure}[t]
\begin{center}
\includegraphics[width=0.8\textwidth]{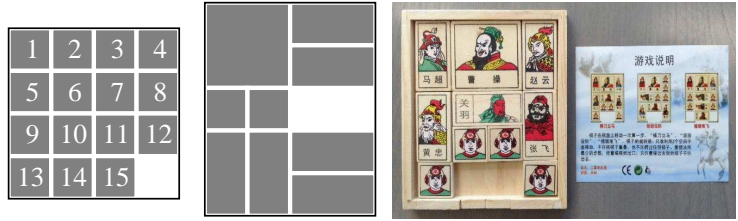}
\end{center}
%\vspace{-2em}
\caption{The 15 puzzle, Dad's puzzle, and its Chinese variant.}
\label{fig:dad}
%\vspace{-1em}
\end{figure}

The reconfiguration problem can be seen as a natural ``puzzle'' from the viewpoint of 
recreational mathematics. The {\em 15 puzzle} is one of the most famous classic puzzles,
that had the greatest impact on American and European society of any mechanical puzzle 
the word has ever known in 1880 (see \cite{Slocum} for its rich history).
It is well known that the 15 puzzle has a parity; 
for any two placements, we can decide whether two placements are reconfigurable or not by checking the parity.
Therefore, we can solve the reconfiguration problem in linear time just by 
checking whether the parity of one placement coincides with the other or not.
Moreover, we can say that the distance between any two reconfigurable placements is $O(n^3)$, 
that is, we can reconfigure from one to the other in $O(n^3)$ sliding pieces when the size of the board is $n\times n$.
However, surprisingly, for these two reconfigurable placements, 
finding a shortest path is NP-complete in general \cite{RW90}.
Namely, although we know that it is $O(n^3)$, finding a shortest one is NP-complete.
Another interesting property of the 15 puzzle is in another case of generalization.
In the 15 puzzle, every peace has the same unit size of $1\times 1$.
We have the other famous classic puzzles that can be seen as a generalization of this viewpoint.
That is, when we allow to have rectangles, we have the other classic puzzles, 
called ``Dad puzzle'' and its variants (see \figurename~\ref{fig:dad}).
Gardner said that ``These puzzles are very much in want of a theory'' in 1964 \cite{Gardner},
and Hearn and Demaine have gave the theory after 40 years \cite{HearnDemaine2005};
they prove that these puzzles are PSPACE-complete in general 
using their nondeterministic constraint logic model \cite{HearnDemaine2009}.
That is, the reconfiguration of the sliding block puzzle is PSPACE-complete in general decision problem,
and linear time solvable if every block is the unit square. 
However, finding a shortest reconfiguration for the latter easy case is NP-complete.
In other words, we can characterize these three complexity classes using the model of sliding block puzzle.

From the viewpoint of theoretical computer science, one of the most important problems is the 3SAT problem.
For this 3SAT problem, a similar trichotomy for the complexity of finding a shortest sequence has been shown
recently; that is, for the reconfiguration problem of 3SAT, finding a shortest sequence between two satisfiable
assignments is in P, NP-complete, or PSPACE-complete in certain conditions \cite{MNPR15}.
In general, the reconfiguration problems tend to be PSPACE-complete,
and some polynomial time algorithms are shown in restricted cases.
In the reconfiguration problems, finding a shortest sequence can be a new trend in theoretical computer
science because it has a great potential to characterize the class NP from a new viewpoint.

Beside the 3SAT problem, one of the most important problems in theoretical computer science is the
independent set problem. 
Recall that an {\em independent set} of a graph $G$ is a vertex-subset of $G$ in which 
no two vertices are adjacent. 
(See \figurename~\ref{fig:example} which depicts five different independent sets in the same graph.)
For this notion, the natural reconfiguration problem is called the {\sc sliding token} problem introduced by 
Hearn and Demaine~\cite{HearnDemaine2005}:
% as a one-player game, which is a ``dynamic'' version of independent sets in a graph.  
% In the context of the reconfiguration problem, this problem is 
% the reconfiguration problem for independent sets.
Suppose that we are given two independent sets $\bfI_b$ and $\bfI_r$ of a graph 
$G = (V,E)$ such that $\msize{\bfI_b}=\msize{\bfI_r}$, 
and imagine that a token (coin) is placed on each vertex in $\bfI_b$. 
Then, the {\sc sliding token} problem is to determine whether there exists a sequence 
$\langle \bfI_1, \bfI_2, \ldots, \bfI_{\ell} \rangle$ of independent sets of $G$ such that
\begin{listing}{aaa}
\item[(a)] $\bfI_1=\bfI_b$, $\bfI_{\ell}=\bfI_r$, 
 and $\msize{\bfI_i} = \msize{\bfI_b}=\msize{\bfI_r}$ for all $i$, $1 \le i \le \ell$; and 
\item[(b)] for each $i$, $2 \le i \le \ell$, 
 there is an edge $\{u,v\}$ in $G$ such that $\bfI_{i-1} \setminus\bfI_{i}=\{u\}$ 
 and $\bfI_{i}\setminus\bfI_{i-1}=\{v\}$, 
 that is, $\bfI_{i}$ can be obtained from $\bfI_{i-1}$ by sliding exactly 
 one token on a vertex $u \in \bfI_{i-1}$ to its adjacent vertex $v$ along $\{u,v\} \in E$.
\end{listing}
Figure~\ref{fig:example} illustrates a sequence 
$\langle \bfI_1, \bfI_2, \ldots, \bfI_5 \rangle$ of independent sets 
which transforms $\bfI_b = \bfI_1$ into $\bfI_r = \bfI_5$. 
Hearn and Demaine proved that the {\sc sliding token} problem is PSPACE-complete for planar graphs.
% as an example of the application of their powerful tool, 
% called the nondeterministic constraint logic model, 
% which can be used to prove PSPACE-hardness of 
% many puzzles and games~\cite{HearnDemaine2005}, \cite[Sec.~9.5]{HearnDemaine2009}. 

\begin{figure}[t]
\begin{center}
\includegraphics[width=0.8\textwidth]{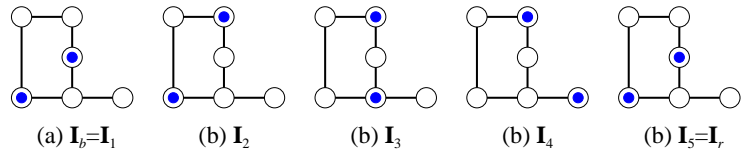}
\end{center}
%\vspace{-2em}
\caption{A sequence $\langle \bfI_1, \bfI_2, \ldots, \bfI_5 \rangle$ of independent sets of the same graph,
 where the vertices in independent sets are depicted by small black circles (tokens).}
\label{fig:example}
%\vspace{-1em}
\end{figure}

(We note that the reconfiguration problem for {\sc independent set} have some variants.
In~\cite{KaminskiMedvedevMilanic2012},
the reconfiguration problem for {\sc independent set} is studied
under three reconfiguration rules called
``token sliding,'' ``token jumping,'' and ``token addition and removal.''
In this paper, we only consider token sliding model, and 
see \cite{KaminskiMedvedevMilanic2012} for the other models.)

For the {\sc sliding token} problem, some polynomial time algorithms are 
investigated as follows: Linear time algorithms have been shown for
cographs (also known as $P_4$-free graphs) \cite{KaminskiMedvedevMilanic2012} 
and trees \cite{DDFEHIOOUY2015}. 
Polynomial time algorithms are shown 
for bipartite permutation graphs \cite{FoxEpsteinHoangOtachiUehara2015}, 
and claw-free graphs \cite{BonsmaKaminskiWrochna}.
On the other hand, PSPACE-completeness is also shown for 
graphs of bounded tree-width \cite{MouawadNishimuraRamanWrochna}, and planar graphs \cite{HearnDemaine2009}.

In this context, we investigate for finding a shortest sequence of the {\sc sliding token} problem,
which is called the {\sc shortest sliding token} problem. That is, our problem is formalized as follows:
\begin{listing}{aaa}
\item[Input:] A graph $G=(V,E)$ and two independent sets $\bfI_b,\bfI_r$ with $\msize{\bfI_b}=\msize{\bfI_r}$.
\item[Output:] A {\em shortest} reconfiguration sequence $\bfI_b=\bfI_1$, $\bfI_2$, $\ldots$, $\bfI_{\ell}=\bfI_r$ such that
	     $\bfI_{i}$ can be obtained from $\bfI_{i-1}$ by sliding exactly 
	     one token on a vertex $u \in \bfI_{i-1}$ to its adjacent vertex $v$ along $\{u,v\} \in E$ for
	     each $i$, $2 \le i \le \ell$.
\end{listing}
% The main motivation of this research is to find a graph class such that 
% the {\sc sliding token} problem can be solved in polynomial time, 
% although the {\sc shortest sliding token} problem is intractable.
We note that $\ell$ is not necessarily in polynomial of $\msize{V}$;
this is an issue how we formalize the problem, and if we do not know that $\ell$ is in polynomial or not.
If the length $k$ is given as a part of input, we may be able to decide whether $\ell \le k$ in polynomial time 
even if $\ell$ itself is not in polynomial. However, if we have to output the sequence itself, 
it cannot be solved in polynomial time if $\ell$ is not in polynomial.

% Therefore, the complexity of this shortest sliding token problem can be different from the complexity
% of the other shortest sliding token problem that just asks if a shortest sequence is less than or equal to
% length $k$ when $k$ is explicitly given as a part of input.

In this paper, we will show that the {\sc shortest sliding token} problem is solvable in polynomial time
for the following graph classes:
\paragraph{Proper interval graphs:}
We first prove that any two independent sets of 
a proper interval graph can be transformed into each other. 
In other words, every proper interval graph with two independent sets 
$\bfI_b$ and $\bfI_r$ is a yes-instance of the problem if $\msize{\bfI_b}=\msize{\bfI_r}$.
Furthermore, we can find the ordering of tokens to be slid in a minimum-length sequence in $O(n)$ time (implicitly),  
even though there exists an infinite family of independent sets on paths 
(and hence on proper interval graphs) for which any sequence requires $\Omega(n^2)$ length.
%(See Section~\ref{subsec:proper_length}.)

\paragraph{Trivially perfect graphs:}
We then give an $O(n)$-time algorithm for trivially perfect graphs 
which actually finds a shortest sequence if such a sequence exists. 
In contrast to proper interval graphs, 
any shortest sequence is of length $O(n)$ for trivially perfect graphs. 
Note that trivially perfect graphs form a subclass of cographs, 
and hence its polynomial time solvability has been known~\cite{KaminskiMedvedevMilanic2012}.
%our algorithm is faster than the known one, 
%because there are trivially perfect graphs with $m = O(n^2)$. 

\paragraph{Caterpillars:}
We finally give an $O(n^2)$-time algorithm for caterpillars for the shortest sliding token problem.
To make self-contained, we first show a linear time algorithm for decision problem that
asks whether two independent sets can be transformed into each other.
(We note that this problem can be solved in linear time for a tree \cite{DDFEHIOOUY2015}.)
For a yes-instance, we next show an algorithm that finds 
a shortest sequence of token sliding between two independent sets.

We here remark that, since the problem is PSPACE-complete in general,
an instance of the {\sc sliding token} problem may require 
the exponential number of independent sets to transform. 
In such a case, tokens should make detours to avoid violating to be independent (as shown in \figurename~\ref{fig:example}).
As we will see, caterpillars certainly require to make detours to transform.
Therefore, it is remarkable that any yes-instance on 
a caterpillar requires a sequence of token-slides of polynomial length.
This is still open even for a tree. 
That is, in a tree, we can determine if two independent sets are reconfigurable in linear time due to 
\cite{DDFEHIOOUY2015}, however, we do not know if the length of the sequence is in polynomial.

As far as the authors know, this is the first polynomial time algorithm for the {\sc shortest sliding token}
 problem for a graph class that requires detours of tokens.

\section{Preliminaries}

In this section, we introduce some basic terms and notations. 
In the {\sc sliding token} problem, 
we may assume without loss of generality that graphs are simple and connected.
For a graph $G=(V,E)$, we let $n=\msize{V}$ and $m=\msize{E}$.

\subsection{{\sc Sliding token}}

For two independent sets $\bfI_i$ and $\bfI_j$ of the same cardinality in a graph $G=(V,E)$, 
if there exists exactly one edge $\{u,v\}$ in $G$ such that $\bfI_{i} \setminus\bfI_{j}=\{u\}$ 
and $\bfI_{j}\setminus\bfI_{i}=\{v\}$, 
then we say that $\bfI_{j}$ can be obtained from $\bfI_{i}$ by {\em sliding} a token on 
the vertex $u \in \bfI_{i}$ to its adjacent vertex $v$ along the edge $\{u,v\}$, 
and denote it by $\bfI_{i} \vdash \bfI_{j}$. 
% 	Note that this adjacency relation can be defined only on independent sets of the same cardinality. 
We remark that the tokens are unlabeled, while the vertices in a graph are labeled.

A {\em reconfiguration sequence} between two independent sets $\bfI_1$ and $\bfI_{\ell}$ of $G$ 
is a sequence $\langle \bfI_1, \bfI_2, \ldots, \bfI_{\ell} \rangle$ of 
independent sets of $G$ such that $\bfI_{i-1} \vdash \bfI_i$ for $i=2, 3, \ldots, \ell$.
We denote by $\bfI_{1} \vdash^* \bfI_{\ell}$ if there exists 
a reconfiguration sequence between $\bfI_1$ and $\bfI_{\ell}$.
We note that a reconfiguration sequence is {\em reversible}, that is,
we have $\bfI_{1} \vdash^* \bfI_{\ell}$ if and only if $\bfI_{\ell} \vdash^* \bfI_{1}$.
Thus we say that two independent sets $\bfI_1$ and $\bfI_{\ell}$ are 
{\em reconfigurable} into each other if $\bfI_{1} \vdash^* \bfI_{\ell}$.
The {\em length} of a reconfiguration sequence $\calS$ is defined as 
the number of independent sets contained in $\calS$.
For example, the length of the reconfiguration sequence in \figurename~\ref{fig:example} is $5$. 

The {\sc sliding token} problem is to determine whether two given independent sets 
$\bfI_b$ and $\bfI_r$ of a graph $G$ are reconfigurable into each other. 
We may assume without loss of generality that $\msize{\bfI_b} = \msize{\bfI_r}$; 
otherwise the answer is clearly ``no.''
Note that the {\sc sliding token} problem is a decision problem asking for the existence of a reconfiguration sequence 
between $\bfI_b$ and $\bfI_r$, and hence it does not ask an actual reconfiguration sequence. 
In this paper, we will consider the {\sc shortest sliding token} problem that 
computes a shortest reconfiguration sequence between two independent sets.
Note that the length of a reconfiguration sequence may not be in polynomial of the size of the graph
since the sequence may contain detours of tokens.

We always denote by $\bfI_b$ and $\bfI_r$ the initial and target independent sets of $G$, 
respectively, as an instance of the {\sc (shortest) sliding token} problem;
we wish to slide tokens on the vertices in $\bfI_b$ to the vertices in $\bfI_r$. 
We sometimes call the vertices in $\bfI_b$ {\em blue}, 
and the vertices in $\bfI_r$ {\em red}; 
each vertex in $\bfI_b\cap\bfI_r$ is blue {\em and} red.

\subsection{Target-assignment}
	
We here give another notation of the {\sc sliding token} problem, 
which is useful to explain our algorithm.
	
Let $\bfI_b=\{b_1,b_2,\ldots,b_k\}$ be an initial independent set of a graph $G$.
For the sake of convenience, we label the tokens on the vertices in $\bfI_b$; 
let $t_i$ be the token placed on $b_i$ for each $i$, $1 \le i \le k$.
Let $\calS$ be a reconfiguration sequence between $\bfI_b$ and 
an independent set $\bfI$ of $G$, and hence $\bfI_b \vdash^* \bfI$.
Then, for each token $t_i$, $1 \le i \le k$, 
we denote by $\fmap{\calS}{t_i}$ the vertex in $\bfI$ on which 
 the token $t_i$ is placed via the reconfiguration sequence $\calS$. 
 Notice that $\{ \fmap{\calS}{t_i} \mid 1 \le i \le k\} = \bfI$. 
	
Let $\bfI_r$ be a target independent set of $G$, 
 which is not necessarily reconfigurable from $\bfI_b$. 
Then, we call a mapping $\fallmap: \bfI_b \to \bfI_r$ 
 a {\em target-assignment} between $\bfI_b$ and $\bfI_r$.
The target-assignment $\fallmap$ is said to be {\em proper} 
 if there exists a reconfiguration sequence $\calS$ such that 
 $\fmap{\calS}{t_i} = \fallmap(b_i)$ for all $i$, $1 \le i \le k$. 
Note that there is no proper target-assignment between $\bfI_b$ and $\bfI_r$ 
 if $\bfI_b \not\vdash^* \bfI_r$. 
Therefore, the {\sc sliding token} problem can be seen as the problem of 
 determining whether there exists at least 
 one proper target-assignment between $\bfI_b$ and $\bfI_r$. 
% % 	It should be noted that $\fallmap(b_i)$ simply indicates the  

\subsection{Interval graphs and subclasses}

The {\em neighborhood} of a vertex $v$ in a graph $G=(V,E)$ is 
the set of all vertices adjacent to $v$, 
and we denote it by $N(v) = \{u\in V \mid \{u,v\}\in E\}$.
Let $N[v] = N(v)\cup\{v\}$.
For any graph $G=(V,E)$, two vertices $u$ and $v$ are called {\em strong twins} if $N[u]=N[v]$,
and {\em weak twins} if $N(u)=N(v)$.
In our problem, strong twins have no meaning: 
when $u$ and $v$ are strong twins, only one of them can be used by a token.
Therefore, in this paper, we only consider the graphs without strong twins.
That is, for any pair of vertices $u$ and $v$, we have $N[u]\neq N[v]$. 
(We have to take care about weak twins; see Section \ref{sec:caterpillars} for the details.)

% For any graph $G=(V,E)$, two vertices $u$ and $v$ are called {\em twins} if $N[u]=N[v]$. 
% In the ISReconf problem, it is easy to see that twins have no meaning.
% Therefore, in this paper, we only consider the graphs without twins.
% That is, for any pair of vertices $u$ and $v$, we have $N[u]\neq N[v]$. 

% For a given graph $G=(V,E)$, a sequence of the vertices $v_0,v_1,\cdots,v_l$ is a {\em path},
% denoted by $(v_0,v_1,\cdots,v_l)$, if $\{v_j,v_{j+1}\}\in E$ for each $0\le j\le l-1$. 
% The {\em length} of a path is the number of edges on the path.

% A {\em cycle} is a path beginning and ending with the same vertex.
% An edge which joins two vertices of a cycle but is not itself an edge
% of the cycle is a {\em chord} of that cycle.
% A graph is {\em chordal} if each cycle of length at least 4 has a chord.
% Given graph $G=(V,E)$, a vertex $v\in V$ is {\em simplicial} in $G$
% if $N(v)$ is a clique in $G$.
% The ordering $(v_1,\cdots,v_n)$ of the vertices of $V$ is 
% a {\em perfect elimination ordering} of $G$ if the vertex $v_i$ is simplicial in 
% $G[\{v_i,v_{i+1},\cdots,v_n\}]$ for all $i=1,\cdots,n$.
% Then a graph is chordal if and only if it has a perfect elimination
% ordering (see, e.g., \cite[Section 1.2]{BLS99} for further details).

A graph $G = (V,E)$ with $V = \{v_1,v_2,\ldots,v_n\}$ is an {\em interval graph} 
if there exists a set $\calI$ of (closed) intervals $I_1,I_2,\ldots,I_n$ 
such that $\{v_i,v_j\}\in E$ if and only if $I_i\cap I_j\neq\emptyset$ 
for each $i$ and $j$ with $1\le i,j\le n$.\footnote{In this paper, 
a bold $\bfI$ denotes an ``independent set,'' an italic $I$ denotes an 
``interval,'' and calligraphy $\calI$ denotes ``a set of intervals.''}
We call the set $\calI$ of intervals an {\em interval representation} of the graph, 
and sometimes identify a vertex $v_i \in V$ with its corresponding interval $I_i \in \calI$.
We denote by $L(I)$ and $R(I)$ the left and right endpoints of 
an interval $I \in \calI$, respectively.
That is, we always have $L(I)\le R(I)$ for any interval $I=[L(I),R(I)]$.

To specify the bottleneck of the running time of our algorithms, 
we suppose that an interval graph $G=(V,E)$ is given as an input by 
its interval representation using $O(n)$ space.
(If necessary, an interval representation of $G$ can be found in $O(n+m)$ time~\cite{KM89}.) 
More precisely, $G$ is given by a string of length $2n$ over alphabets 
$\{L(I_1), L(I_2), \ldots, L(I_n), R(I_1), R(I_2), \ldots, \allowbreak R(I_n)\}$.
For example, a complete graph $K_3$ with three vertices can be given by 
an interval representation $L(I_1) L(I_2) L(I_3) \allowbreak R(I_1) R(I_2) R(I_3)$,
and a path of length two is given by an interval representation $L(I_1) L(I_2) R(I_1) L(I_3) R(I_2) R(I_3)$. 
% \medskip

% \subsection{Interval graph classes}
% In this paper, we deal with three subclasses of interval graphs.
	
An interval graph is {\em proper} if it has an interval representation 
such that no interval properly contains another. 
The class of proper interval graphs is also known as 
the class of unit interval graphs~\cite{BogartWest1999}:
an interval graph is {\em unit} if it has an interval representation 
such that every interval has unit length. 
Hereafter, we assume that each proper interval graph is given in 
the interval representation of intervals of unit length.
In the context of the interval representation, 
an interval graph is proper if and only if $L(I_i)<L(I_j)$ if and only if $R(I_i)<R(I_j)$.

An interval graph is {\em trivially perfect} if 
% its corresponding $\MPQ$-tree consists of only nonempty $\PP$-nodes (and leaves).\footnote{%
% In general, $\PP$-node can be empty if it is a child of a $\Q$-node.
% If an empty $\PP$-node is a child of another $\PP$-node, they can be merged.
% Hence we have no empty node in the $\MPQ$-tree of a (connected) trivially perfect graph.}
% 	In other words, 
it has an interval representation such that the relationship between 
any two intervals is either disjoint or inclusion.
That is, for any two intervals  $I_i$ and $I_j$ with $L(I_i)<L(I_j)$, we have either
$L(I_i)<L_(I_j)<R(I_j)<R(I_i)$ or $L(I_i)<R(I_i)<L(I_j)<R(I_j)$.

A {\em caterpillar} $G=(V,E)$ is a tree (i.e., a connected acyclic graph) 
that consists of two subsets $S$ and $L$ of $V$ as follows.
The vertex set $S$ induces a path $(s_1,\ldots,s_{n'})$ in $G$,
and each vertex $v$ in $L$ has degree 1, and its unique neighbor is in $S$.
We call the path $(s_1,\ldots,s_{n'})$ {\em spine}, and each vertex in $L$ {\em leaf}.
In this paper, without loss of generality, we assume that $n'\ge 2$,
$\deg(s_1)\ge 2$, and $\deg(s_{n'})\ge 2$. 
That is, the endpoints $s_1$ and $s_{n'}$ of 
the spine $(s_1,\ldots,s_{n'})$ should have at least one leaf.
It is easy to see that the class of caterpillars is a proper subset of the class of interval graphs,
and these three subclasses are incomparable with each other.

\section{Proper Interval Graphs}
\label{sec:proper}

We show the main theorem in this section for proper interval graphs, 
which first says that the answer of {\sc sliding token} is always ``yes'' for connected proper interval graphs.
We give a constructive proof of the claim, and it certainly finds a shortest sequence in linear time.
\begin{theorem}
\label{th:proper}
For a connected proper interval graph $G=(V,E)$,
any two independent sets $\bfI_{b}$ and $\bfI_{r}$ with $\msize{\bfI_b} = \msize{\bfI_r}$ 
are reconfigurable into each other, that is, $\bfI_{b}\vdash^* \bfI_{r}$.
Moreover, the shortest reconfiguration sequence can be found in polynomial time.
\end{theorem}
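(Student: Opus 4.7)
The plan is to exploit the linear ordering that comes from a proper interval representation. Label the vertices $v_1,\ldots,v_n$ so that both $L(I_i)<L(I_j)$ and $R(I_i)<R(I_j)$ hold whenever $i<j$, and sort the two independent sets as $\bfI_b=\{b_1<b_2<\cdots<b_k\}$ and $\bfI_r=\{r_1<r_2<\cdots<r_k\}$ in this ordering. The first step is an order-preservation lemma: in a proper interval graph every closed neighborhood $N[v_i]$ is a consecutive block in the linear order, so two tokens can never exchange positions during a reconfiguration sequence. Consequently the only target-assignment compatible with any valid reconfiguration from $\bfI_b$ to $\bfI_r$ is the canonical one $g(b_j)=r_j$ for all $j$, and what remains is to show that this canonical assignment is proper and to find a shortest realization.

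To realize it I would give a two-phase scheduling algorithm. Call index $j$ a right-mover if $b_j<r_j$ and a left-mover if $b_j>r_j$. In phase~1 the right-movers are processed in decreasing order of $j$, each token being slid along a shortest $b_j$-$r_j$ path in $G$. In phase~2 the left-movers are processed in increasing order of $j$, similarly. By induction on the step count, I would prove the invariant that at the moment $t_j$ performs its next slide, every other token $t_{j'}$ sits at a vertex of rank $\le b_j$ for $j'<j$ and $\ge r_j$ for $j'>j$; combined with the consecutive-neighborhood property, this forces every intermediate vertex set to remain independent. Hence $\bfI_b\vdash^*\bfI_r$, which already establishes the reconfigurability part of the theorem.

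Optimality uses a matching lower bound: each slide advances exactly one token along one edge, so any sequence realizing $g(b_j)=r_j$ must contain at least $\sum_{j=1}^{k} d_G(b_j,r_j)$ slides, and the two-phase algorithm attains exactly this number because it drives each token along a shortest $b_j$-$r_j$ path with no detour. The proper interval ordering and all distances $d_G(b_j,r_j)$ can be read off from the endpoint string in a single left-to-right sweep, so the schedule itself is produced in $O(n)$ time (consistent with the remark before the theorem), even though the explicit sequence may be of length $\Omega(n^2)$. I expect the main obstacle to be verifying the two-phase invariant in the case where a right-moving $t_j$'s route comes close to a not-yet-processed left-mover $t_{j+1}$; handling this case is exactly where the consecutive-neighborhood structure of proper interval graphs, which fails for general interval graphs, becomes essential.
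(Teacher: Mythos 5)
Your proposal is correct, and it reaches the same canonical target-assignment $g(b_j)=r_j$ with per-token shortest paths as the paper, but it gets there by a different route. The paper never states your order-preservation (non-crossing) lemma explicitly: it builds the merged left-to-right string of blue and red vertices, partitions it at height $0$, proves a balance lemma inside each substring, slides tokens substring-by-substring (rightmost-first in blue-starting substrings, leftmost-first in red-starting ones), and resolves interactions between consecutive substrings by a partial order on the substrings; your global two-phase schedule (right-movers by decreasing index, then left-movers by increasing index) is a valid linearization compatible with that partial order, and it is simpler to state. What your route buys is a cleaner optimality argument: the non-crossing lemma is exactly what makes the lower bound $\sum_j d_G(b_j,r_j)$ legitimate (no other assignment can be realized, and each token needs at least its distance), a point the paper's Proposition~2 leaves largely implicit; what the paper's height-$0$ decomposition buys is locality (each substring's tokens stay inside a fixed interval of the representation), which makes the $O(n)$ computation of the ordering and the non-interference of far-apart tokens immediate. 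Two small repairs you should make when writing this up: state the induction invariant in terms of interval disjointness rather than rank (when $t_j$ moves, every token $t_{j'}$ with $j'<j$ occupies a vertex whose interval lies strictly to the left of the corridor $[\min(L(b_j),L(r_j)),\max(R(b_j),R(r_j))]$, and every $t_{j'}$ with $j'>j$ strictly to the right — mere rank bounds like ``$\le b_j$'' do not by themselves exclude adjacency to the corridor, and your stated bounds are tailored to right-movers), and fix the route of each token to be the monotone (greedy rightmost/leftmost-neighbor) shortest path so that it provably stays inside the corridor; both facts then follow from the independence of $\bfI_b$ and $\bfI_r$ together with the unit-interval ordering, exactly as you anticipate.
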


We give a constructive proof for Theorem~\ref{th:proper}, that is, 
we give an algorithm which actually finds a shortest reconfiguration sequence between 
any two independent sets $\bfI_b$ and $\bfI_r$ of a connected proper interval graph $G$.

A connected proper interval graph $G=(V,E)$ has a unique interval representation (up to reversal), and
we can assume that each interval is of unit length in the representation \cite{DengHellHuang1996}.
Therefore, by renumbering the vertices, we can fix an interval representation 
$\calI =\{I_1,I_2,\ldots,I_n\}$ of $G$ so that $L(I_i) < L(I_{i+1})$ (and $R(I_i) < R(I_{i+1})$)
for each $i$, $1 \le i \le n-1$, and each interval $I_i \in \calI$ corresponds to the vertex $v_i \in V$.

Let $\bfI_b = \{b_1, b_2, \ldots, b_k\}$ and $\bfI_r=\{r_1, r_2, \ldots, r_k\}$ be 
any given initial and target independent sets of $G$, respectively. 
Without loss of generality, we assume that the blue vertices $b_1, b_2, \ldots, b_k$ 
are labeled from left to right (according to the interval representation $\calI$ of $G$), 
that is, $L(b_i) < L(b_j)$ if $i < j$; similarly, 
we assume that the red vertices $r_1, r_2, \ldots, r_k$ are labeled from left to right. 
Then, we define a target-assignment $\fallmap: \bfI_b \to \bfI_r$, as follows:
for each blue vertex $b_i \in \bfI_b$
\begin{equation}
 \label{eq:map_proper}
 	\fallmap(b_i) = r_i.
\end{equation}
To prove Theorem~\ref{th:proper}, it suffices to show that $\fallmap$ is proper, 
and each token takes no detours.
% We will show that there exists a reconfiguration sequence $\calS$ 
% between $\bfI_b$ and $\bfI_r$ such that $\fmap{\calS}{t_i} = \fallmap(b_i) = r_i$ for all $i$, $1 \le i \le k$. 
	
\subsection{String representation}
By traversing the interval representation $\calI$ of a connected proper interval graph $G$ 
from left to right, we can obtain a string $S =s_1s_2\cdots s_{2k}$ which is a superstring of 
both $b_1b_2 \cdots b_k$ and $r_1r_2 \cdots r_k$, that is, 
each letter $s_i$ in $S$ is one of the vertices in $\bfI_b \cup \bfI_r$ and 
$s_i$ appears in $S$ before $s_j$ if $L(s_i) < L(s_j)$.
We may assume without loss of generality that $s_1=b_1$ 
since the reconfiguration rule is symmetric in {\sc sliding token}.
If a vertex is contained in both $\bfI_b$ and $\bfI_r$, 
as $b_i$ and $r_j$, then we assume that it appears as $b_i r_j$ in $S$, 
that is, the blue vertex $b_i$ appears in $S$ before the red vertex $r_j$.
Then, for each $i$, $1 \le i \le 2k$, we define the {\em height $h(i)$ at $i$} by 
the number of blue vertices appeared in the substring $s_1 s_2 \cdots s_i$ minus 
the number of red vertices appeared in $s_1 s_2 \cdots s_i$. 
For the sake of notational convenience, we define $h(0) = 0$.
Then $h(i)$ can be recursively computed as follows:
\begin{equation} 
\label{eq:height}
h(i) = \left\{ 
	\begin{array}{ll}
	0            & ~~~\mbox{if $i=0$}; \\
	h(i-1) + 1 & ~~~\mbox{if $s_i$ is blue}; \\
	h(i-1) - 1 & ~~~\mbox{if $s_i$ is red}.
	\end{array} \right.
\end{equation}
Note that $h(2k) = 0$ for any string $S$ since $\msize{\bfI_b} = \msize{\bfI_r}$.

Using the notion of height, we split the string $S$ into substrings $S_1, S_2, \ldots, S_h$ 
at every point of height $0$, that is, in each substring $S_j = s_{2p+1} s_{2p+2} \cdots s_{2q}$, 
we have $h(2q) = 0$ and $h(i) \neq 0$ for all $i$, $2p+1 \le i \le 2q-1$. 
For example, a string $S=b_1b_2r_1r_2 b_3r_3 r_4r_5b_4r_6b_5r_7b_6r_8b_7b_8 b_9r_9$ 
can be split into four substrings 
$S_1= b_1b_2r_1r_2$, $S_2=b_3r_3$, $S_3=r_4r_5b_4r_6b_5r_7b_6r_8b_7b_8$ and $S_4=b_9r_9$. 
Then, the substrings $S_1, S_2, \ldots, S_h$ form a partition of $S$, 
and each substring $S_j$ contains the same number of blue and red tokens.
We call such a partition the {\em partition of $S$ at height $0$}. 

\begin{lemma} \label{lem:balance}
Let $S_j = s_{2p+1} s_{2p+2} \cdots s_{2q}$ be a substring in the partition of the string $S$ at height $0$. Then,
\begin{listing}{aaa}
 \item[{\rm (}a{\rm )}] the blue vertices $b_{p+1}, b_{p+2}, \ldots, b_{q}$ appear in $S_j$, 
  and their corresponding red vertices $r_{p+1}, r_{p+2}, \ldots, r_{q}$ appear in $S_j${\rm ;} 
 \item[{\rm (}b{\rm )}] if $S_j$ starts with the blue vertex $b_{p+1}$, 
  then each blue vertex $b_i$, $p+1 \le i \le q$, 
  appears in $S_j$ before its corresponding red vertex $r_i${\rm ;} and
 \item[{\rm (}c{\rm )}] if $S_j$ starts with the red vertex $r_{p+1}$, 
  then each blue vertex $b_i$, $p+1 \le i \le q$, appears in $S_j$ after its corresponding red vertex $r_i$. 
\end{listing}
\end{lemma}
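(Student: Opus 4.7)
The plan is to deduce (a) directly from the definition of height together with the order-preservation built into the superstring $S$, and then to obtain (b) and (c) by exploiting the fact that height does not hit $0$ in the strict interior of any $S_j$.

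For part (a), I would first record that $S$ is a superstring of $b_1 b_2 \cdots b_k$ and of $r_1 r_2 \cdots r_k$, so the blue letters of $S$ read from left to right are $b_1, b_2, \ldots, b_k$ in this order, and similarly for red. Now $h(2p) = 0$ forces the prefix $s_1 s_2 \cdots s_{2p}$ to contain exactly $p$ blue and $p$ red vertices; by the ordering above, these must be exactly $b_1, \ldots, b_p$ and $r_1, \ldots, r_p$. The same argument applied at $h(2q) = 0$ says that $s_1 s_2 \cdots s_{2q}$ contains exactly $b_1, \ldots, b_q$ and $r_1, \ldots, r_q$. Taking set differences yields that $S_j$ contains precisely the blues $b_{p+1}, \ldots, b_q$ and the reds $r_{p+1}, \ldots, r_q$, each in index order.

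For (b), suppose $s_{2p+1} = b_{p+1}$, so $h(2p+1) = 1$. Since $h$ changes by $\pm 1$ and, by the definition of the partition, does not return to $0$ before position $2q$, we have $h(i) \geq 1$ for every $i \in \{2p+1, \ldots, 2q-1\}$. Fix $i \in \{p+1, \ldots, q\}$ and set $j = i - p$; by (a) and the index-order of blues in $S_j$, the letter $b_i$ is the $j$-th blue letter of $S_j$. At the position where $b_i$ appears, the quantity (blues so far in $S_j$) $-$ (reds so far in $S_j$) equals $h$ at that position minus $h(2p) = 0$, which is at least $1$; hence the number of reds so far in $S_j$ is at most $j-1$. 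By (a) and the index-order of reds, those reds are among $r_{p+1}, \ldots, r_{p+j-1}$, so $r_i$ has not yet been encountered and therefore appears strictly after $b_i$. Part (c) follows by the symmetric argument: starting with $r_{p+1}$ makes $h(2p+1) = -1$, forces $h(i) \leq -1$ throughout the interior, and the same counting shows that $r_i$ precedes $b_i$ in $S_j$ for each $i$.

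The argument is essentially a bookkeeping of prefix counts of blues and reds, so I do not expect a deep obstacle; the one subtlety worth handling explicitly is the tie-breaking convention for vertices lying in $\bfI_b \cap \bfI_r$ (the blue copy is placed immediately before the red copy in $S$), which keeps the index-by-index alignment in (b) and (c) consistent even when $b_i$ and $r_i$ correspond to the same underlying vertex.
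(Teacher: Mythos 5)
Your proof is correct and takes essentially the same route as the paper: both arguments are height/prefix-count bookkeeping resting on the fact that $h$ cannot return to $0$ in the interior of $S_j$, with the paper phrasing (b) as a contradiction (an inversion $r_{i}$ before $b_{i}$ would force an interior zero of $h$) where you count directly. One negligible point: when you invoke $h\ge 1$ at the position of $b_i$, that position cannot be $2q$ (a blue $s_{2q}$ would give $h(2q)\neq 0$), so the interior bound indeed applies.
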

\begin{proof}
By the definitions, the claim (a) clearly holds.
We thus show that the claim (b) holds. (The proof for the claim (c) is symmetric.)

Since $h(2p) = 0$ and $S_j$ starts with a blue vertex, we have $h(2p+1) = 1 > 0$. 
We now suppose for a contradiction that there exists 
a blue vertex $s_x = b_{i'}$ which appears in $S_j$ after its corresponding red vertex $s_y = r_{i'}$. 
Then, $y < x$. 
We assume that $y$ is the minimum index among such blue vertices in $S_j$.
Then, in the substring $s_1 s_2 \cdots s_y$ of $S$, there are exactly $i'$ red vertices. 
On the other hand, since $y < x$, the substring $s_1 s_2 \cdots s_y$ contains at most $i'-1$ blue vertices. 
Therefore, by the definition of height, we have $h(y) < 0$. 
Since $h(2p+1) = 1 > 0$ and  $h(y) < 0$, by Eq.~(\ref{eq:height}) there must exist an index $z$ 
such that $h(z) = 0$ and $2p < z < y$. 
This contradicts the fact that $S_j$ is a substring in the partition of $S$ at height $0$. 
\qed
\end{proof}

\subsection{Algorithm}
\label{subsec:algo_proper}
	
Recall that we have fixed the unique interval representation $\calI =\{I_1,I_2,\ldots,I_n\}$ of 
a connected proper interval graph $G$ so that $L(I_i) < L(I_{i+1})$ for each $i$, 
$1 \le i \le n-1$, and each interval $I_i \in \calI$ corresponds to the vertex $v_i \in V$.
Since all intervals in $\calI$ have unit length, the following proposition clearly holds.
\begin{proposition}
\label{prop:optimalway}
For two vertices $v_i$ and $v_j$ in $G$ such that $i < j$, 
there is a path $P$ in $G$ which passes through only intervals {\rm (}vertices{\rm )} 
contained in $[L(I_{i}), R(I_{j})]$.
Furthermore, if $I_{i'} \cap I_i = \emptyset$ for some index $i'$ with $i'<i$, 
no vertex in $v_1, v_2, \ldots, v_{i'}$ is adjacent to any vertex in $P$. 
If $I_{j} \cap I_{j'} = \emptyset$ for some index $j'$ with $j<j'$, 
no vertex in $v_{j'}, v_{j'+1}, \ldots, v_{n}$ is adjacent to any vertex in $P$. 
\end{proposition}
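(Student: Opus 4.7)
The natural candidate for $P$ is the path $v_i, v_{i+1}, \ldots, v_j$ obtained by walking through the intervals in the order of their left endpoints. The plan is to verify three things in turn: (i) consecutive pairs $v_k, v_{k+1}$ are adjacent, so this is indeed a path in $G$; (ii) every $I_k$ with $i \le k \le j$ lies inside $[L(I_i), R(I_j)]$; and (iii) the two ``furthermore'' statements about vertices outside this range being disconnected from $P$.

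For (i), I would argue by contradiction using connectivity. Suppose $I_k \cap I_{k+1} = \emptyset$ for some $k$ with $i \le k < j$. Since all intervals are of unit length, this forces $R(I_k) < L(I_{k+1})$. By the fixed ordering $L(I_1) < L(I_2) < \cdots < L(I_n)$ and the proper interval property (equivalently $R(I_1) < R(I_2) < \cdots < R(I_n)$), every interval in $\{I_1,\ldots,I_k\}$ has its right endpoint at most $R(I_k)$, and every interval in $\{I_{k+1},\ldots,I_n\}$ has its left endpoint at least $L(I_{k+1})$. Hence there is no edge between $\{v_1,\ldots,v_k\}$ and $\{v_{k+1},\ldots,v_n\}$, contradicting the connectivity of $G$. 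For (ii), the inclusion $I_k \subseteq [L(I_i), R(I_j)]$ for $i \le k \le j$ is immediate from $L(I_i) \le L(I_k)$ and $R(I_k) \le R(I_j)$, both of which follow from the monotonicity of $L$ and $R$ in the unit interval representation.

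For (iii), suppose $I_{i'} \cap I_i = \emptyset$ with $i' < i$; then $R(I_{i'}) < L(I_i)$. For any $i'' \le i'$ the monotonicity gives $R(I_{i''}) \le R(I_{i'}) < L(I_i)$, so $I_{i''}$ lies entirely to the left of $[L(I_i), R(I_j)]$ and cannot intersect any interval on $P$. The symmetric argument handles the case $I_j \cap I_{j'} = \emptyset$ with $j' > j$ by comparing left endpoints.

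The main point to be careful about is really just (i), where connectivity of $G$ together with the tight correspondence between the $L$- and $R$-orderings (which characterizes proper interval graphs) is what forces adjacency of consecutive intervals; the other two items are direct consequences of the monotonic ordering of endpoints in the fixed unit interval representation. No detours or auxiliary lemmas are needed beyond the facts already established before the proposition.
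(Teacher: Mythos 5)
Your proposal is correct: the path $v_i, v_{i+1}, \ldots, v_j$, adjacency of consecutive intervals via connectivity and the common ordering of left and right endpoints, and the monotonicity argument for the two ``furthermore'' claims are exactly the intended justification. The paper itself offers no proof (it simply states that the proposition ``clearly holds'' from the fixed unit-interval representation), so your write-up fills in precisely that omitted routine argument with no deviation in approach.
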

% \begin{proof}
% Without loss of generality, we assume that $L(I_i)<L(I_{i'})$.
% That is, the token will be slid from left to right.
% From $I_i$, we choose $I_{j}$ with $I_i\cap I_{j}\neq\emptyset$ such that $L(I_j)$ takes maximum among
%  all $I_{j'}$ with $I_i\cap I_{j'}\neq\emptyset$. If $L(I_{i'})\le L(I_j)$, 
% we can slide the token from $I_i$ to $I_{i'}$ directly. Otherwise, we slide the token to $I_j$.
% We repeat this process; intuitively, we slide the token to the rightmost interval up to $I_{i'}$.
% It is trivial that this way is optimal, and all intervals are included $[L(I_{i}),R(I_{i'})]$.
% \qed
% \end{proof}
%%%%%%%%%%%%%%%%%%%%%%%%%%%%%%%%%%%%%%%%%

Let $S$ be the string of length $2k$ obtained from two given independent sets $\bfI_b$ and $\bfI_r$ 
of a connected proper interval graph $G$, where $k = \msize{\bfI_b} = \msize{\bfI_r}$. 
Let $S_1, S_2, \ldots, S_h$ be the partition of $S$ at height $0$. 
The following lemma shows that the tokens in each substring $S_j$ 
can always reach their corresponding red vertices.
(Note that we sometimes denote simply by $S_j$ the set of all 
vertices appeared in the substring $S_j$, $1 \le j \le h$.) 
\begin{lemma}
\label{lem:proper-simple}
Let $S_j = s_{2p+1} s_{2p+2} \cdots s_{2q}$ be a substring in the partition of $S$ at height $0$. 
Then, there exists a reconfiguration sequence between $\bfI_b \cap S_j$ and $\bfI_r \cap S_j$ 
such that tokens are slid along edges only in the subgraph of $G$ 
induced by the vertices contained in $[L(s_{2p+1}), R(s_{2q})]$. 
\end{lemma}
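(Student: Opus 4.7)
My plan is to reconfigure the tokens of $S_j$ one at a time, routing each token $b_i$ to its target $r_i$ along a path furnished by Proposition~\ref{prop:optimalway}, with the processing order dictated by Lemma~\ref{lem:balance}. By Lemma~\ref{lem:balance}(a) the blue and red vertices appearing in $S_j$ are exactly $\{b_{p+1},\ldots,b_q\}$ and $\{r_{p+1},\ldots,r_q\}$, so it suffices to slide $b_i$ to $r_i$ for each $i$. I would then split on whether $S_j$ starts with a blue vertex (case (b) of Lemma~\ref{lem:balance}, so every $b_i$ appears weakly to the left of $r_i$) or a red vertex (case (c), so every $b_i$ appears strictly to the right of $r_i$).

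In case (b) I would process the tokens in the order $b_q, b_{q-1},\ldots, b_{p+1}$; in each step I slide the current token from $b_i$ to $r_i$ along a path $P_i$ contained in $[L(b_i),R(r_i)]$ (and do nothing if $b_i=r_i$). Case (c) is handled symmetrically with the order reversed, each $P_i$ contained in $[L(r_i),R(b_i)]$. These orderings guarantee that, at the moment the $i$-th routing begins, every already-moved token sits on its red target on the far side of the current path (right in case (b), left in case (c)), while every still-unmoved blue token sits at its original position on the near side. Moreover every $P_i$ lies inside $[L(s_{2p+1}),R(s_{2q})]$, which is exactly the containment claim of the lemma.

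The main obstacle is to verify that every single-edge slide along $P_i$ preserves independence. For this I would use the unit-length property together with the independence of $\bfI_b$ and $\bfI_r$ in a uniform way: for any intermediate vertex $v$ on $P_i$ we have $L(b_j)+1=R(b_j)<L(b_i)\le L(v)$ for every $j<i$ (left-side blue token), and $R(v)\le R(r_i)<L(r_k)$ for every $k>i$ (right-side red token). Hence $v$ is disjoint from, and so not adjacent to, every currently occupied vertex other than the one being slid off; the same two inequalities also show that the whole token-set remains independent just before and just after each slide. Iterating over all $i$ then yields a reconfiguration sequence from $\bfI_b\cap S_j$ to $\bfI_r\cap S_j$ using only vertices in $[L(s_{2p+1}),R(s_{2q})]$, as required.
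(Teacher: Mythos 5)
Your proposal is correct and follows essentially the same route as the paper: the same processing order (rightmost token first when $S_j$ starts blue, leftmost first when it starts red), the same use of Proposition~\ref{prop:optimalway} to route each token inside $[L(b_i),R(r_i)]$ (resp.\ $[L(r_i),R(b_i)]$), and the same appeal to the independence of $\bfI_b$ and $\bfI_r$ to show no occupied vertex is adjacent to the routing path. Your explicit unit-interval endpoint inequalities are just a spelled-out version of the paper's invocation of Proposition~\ref{prop:optimalway}, so no further changes are needed.
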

\begin{proof}
We first consider the case where $S_j$ starts with the blue vertex $b_{p+1}$, that is, $s_{2p+1} = b_{p+1}$. 
Then, by Lemma~\ref{lem:balance}(b) each blue vertex $b_i$, $p+1 \le i \le q$, 
appears in $S_j$ before the corresponding red vertex $r_i$.
Therefore, we know that $s_{2q} = r_q$, and hence it is red.
Suppose that $s_{x} = b_{q}$, then all vertices appeared in $s_{x+1} s_{x+2} \cdots s_{2q}$ are red. 
Roughly speaking, we slide the tokens $t_q, t_{q-1}, \ldots, t_{p+1}$ from left to right in this order. 

We first claim that the token $t_q$ can be slid from $b_q$ $(= s_x)$ to $r_q$ $(= s_{2q})$. 
By Proposition~\ref{prop:optimalway} 
there is a path $P$ between $b_q$ and $r_q$ which passes through only intervals contained in $[L(b_q), R(r_q)]$. 
Since $\bfI_b$ is an independent set of $G$, 
the vertex $b_q$ is not adjacent to any other vertices $b_{p+1}, b_{p+2}, \ldots, b_{q-1}$ in $\bfI_b \cap S_j$. 
Since $L(b_{p+1}) < L(b_{p+2}) < \cdots < L(b_{q-1}) < L(b_q)$, 
by Proposition~\ref{prop:optimalway} all vertices in $P$ are not adjacent to 
any of tokens $t_{p+1}, t_{p+2}, \ldots, t_{q-1}$ that are now placed on 
$b_{p+1}, b_{p+2}, \ldots, b_{q-1}$, respectively.  
Therefore, we can slide the token $t_q$ from $b_q$ to $r_q$. 
We fix the token $t_q$ on $r_q=s_{2q}$, and will not slide it anymore.

We then slide the next token $t_{q-1}$ on $b_{q-1}$ to $r_{q-1}$ along 
a path $P'$ which passes through only intervals contained in $[L(b_{q-1}), R(r_{q-1})]$. 
Since $\bfI_r$ is an independent set of $G$, 
the corresponding red vertex $r_{q-1}$ is not adjacent to $r_q$ on which the token $t_q$ is now placed. 
Recall that $L(r_{q-1}) < L(r_q)$, and hence by Proposition~\ref{prop:optimalway}, 
$r_q$ is not adjacent to any vertex in $P'$. 
Similarly as above, the tokens $t_{p+1}, t_{p+2}, \ldots, t_{q-2}$ are not adjacent to any vertex in $P'$. 
Therefore, we can slide the token $t_{q-1}$ from $b_{q-1}$ to $r_{q-1}$. 

Repeat this process until the token $t_{p+1}$ on $b_{p+1}$ is slid to $r_{p+1}$. 
In this way, there is a reconfiguration sequence between $\bfI_b \cap S_j$ and $\bfI_r \cap S_j$ 
such that tokens are slid along edges only in the subgraph of $G$ induced by 
the vertices contained in $[L(b_{p+1}), R(r_{q})]$. 

The symmetric arguments prove the case where $S_j$ starts with the red vertex $r_{p+1}$. 
Note that, in this case, we slide the tokens $t_{p+1}, t_{p+2}, \ldots, t_q$ from right to left in this order. 
\qed
\end{proof}

% 	We are now ready to prove Theorem~\ref{th:proper}.
% 	\medskip
	
\noindent
{\bf Proof of Theorem~\ref{th:proper}.}
We now give an algorithm which slides all tokens on the vertices in $\bfI_{b}$ to the vertices in $\bfI_r$.
Recall that $S_1, S_2, \ldots, S_h$ are the substrings in the partition of $S$ at height $0$. 
Intuitively, the algorithm repeatedly picks up one substring $S_j$, 
and slides all tokens in $\bfI_b \cap S_j$ to $\bfI_r \cap S_j$. 
By Lemma \ref{lem:proper-simple} it works locally in each substring $S_j$, 
but it should be noted that a token in $S_j$ may be adjacent to another token 
in $S_{j-1}$ or $S_{j+1}$ at the boundary of the substrings.
To avoid this, we define a partial order over the substrings $S_1, S_2, \ldots, S_h$, as follows.

Consider any two consecutive substrings $S_j$ and $S_{j+1}$, 
and let $S_j = s_{2p+1} s_{2p+2} \allowbreak \cdots s_{2q}$.
Then, the first letter of $S_{j+1}$ is $s_{2q+1}$. 
We first consider the case where both $s_{2q}$ and $s_{2q+1}$ are the same color.
Then, since $s_{2q}$ and $s_{2q+1}$ are both in the same independent set of $G$, 
they are not adjacent.
Therefore, by Proposition~\ref{prop:optimalway} and Lemma~\ref{lem:proper-simple}, 
we can deal with $S_j$ and $S_{j+1}$ independently.
In this case, we thus do not define the ordering between $S_j$ and $S_{j+1}$.
We then consider the case where $s_{2q}$ and $s_{2q+1}$ have different colors; 
in this case, we have to define their ordering.
Suppose that $s_{2q}$ is blue and $s_{2q+1}$ is red;
then we have $s_{2q} = b_q$ and $s_{2q+1} = r_{q+1}$.
By Lemma~\ref{lem:proper-simple} the token $t_q$ on $s_{2q}$ is slid to left, 
and the token $t_{q+1}$ will reach $r_{q+1}$ from right. 
Therefore, the algorithm has to deal with $S_j$ before $S_{j+1}$.
Note that, after sliding all tokens $t_{p+1}, t_{p+2}, \ldots, t_q$ in $S_j$, 
they are on the red vertices $r_{p+1}, r_{p+2}, \ldots, r_q$, respectively, 
and hence the tokens in $S_{j+1}$ are not adjacent to any of them.
By the symmetric argument, if $s_{2q}$ is red and $s_{2q+1}$ is blue, 
$S_{j+1}$ should be dealt with before $S_j$.

Notice that such an ordering is defined only for two consecutive substrings $S_j$ and $S_{j+1}$, $1 \le j \le h-1$. 
Therefore, the partial order over the substrings $S_1, S_2, \ldots, S_h$ is acyclic, 
and hence there exists a total order which is consistent with the partial order defined above.
The algorithm certainly slides all tokens from $\bfI_b$ to $\bfI_r$ according to the total order.
Therefore, the target-assignment $\fallmap$ defined in Eq.~(\ref{eq:map_proper}) is proper, 
and hence $\bfI_{b}\vdash^* \bfI_{r}$.

% This completes the proof of Theorem~\ref{th:proper}.
% \qed

% \subsection{Minimum-length reconfiguration sequence}
% \label{subsec:proper_length}
	
%By Theorem~\ref{th:proper} there always exists a reconfiguration sequence 
%between two independent sets $\bfI_b$ and $\bfI_r$ of a connected proper interval graph $G$. 

Therefore, there always exists a reconfiguration sequence 
between two independent sets $\bfI_b$ and $\bfI_r$ of a connected proper interval graph $G$. 
We now discuss the length of reconfiguration sequences between $\bfI_b$ and $\bfI_r$, 
together with the running time of our algorithm. % in Section~\ref{subsec:algo_proper}. 
\begin{proposition} 
\label{prop:proper}
For two given independent sets $\bfI_b$ and $\bfI_r$ of 
a connected proper interval graph $G$ with $n$ vertices, 
% shortest reconfiguration sequence between $\bfI_b$ and $\bfI_r$ 
%can be found in $O(n)$ time and $O(n)$ space. 
\begin{listing}{aaa}
 \item[$(1)$] the ordering of tokens to be slid in a shortest reconfiguration sequence between 
	      them can be computed in $O(n)$ time and $O(n)$ space{\rm ;} and 
 \item[$(2)$] a shortest reconfiguration sequence between them can be output 
	      in $O(n^2)$ time and $O(n)$ space.
\end{listing}
\end{proposition}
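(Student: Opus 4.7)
The plan is to bound the cost of each phase of the algorithm from the proof of Theorem~\ref{th:proper} and to verify that the sequence it produces is indeed shortest. For part~(1), given the input as the $2n$-letter endpoint string with $\bfI_b$ and $\bfI_r$ marked, a single left-to-right scan simultaneously yields the string $S$, the height sequence of Eq.~(\ref{eq:height}), and the decomposition $S_1,\ldots,S_h$ at height~$0$, all in $O(n)$ time and $O(n)$ space. The partial order from the proof of Theorem~\ref{th:proper} is defined only between consecutive pairs $(S_j,S_{j+1})$ and depends solely on the colors of $s_{2q}$ and $s_{2q+1}$, so a single sweep over the $h-1\le n$ boundaries already yields a consistent total order (no general topological sort is needed). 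Combined with the within-substring processing direction prescribed by Lemma~\ref{lem:balance}, this gives the ordering of tokens to be slid in $O(n)$ time and $O(n)$ space.

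For optimality, I would argue that any reconfiguration sequence induces a bijection $\pi\colon\bfI_b\to\bfI_r$ of length at least $\sum_{b\in\bfI_b}\dist(b,\pi(b))$, and that the sorted matching $\fallmap(b_i)=r_i$ minimizes this sum. The key fact is that in a proper interval graph the distance $\dist(v_i,v_j)$ is monotone in $|i-j|$, so for any crossing pair with $b_i<b_j$ and $\pi(b_i)$ strictly right of $\pi(b_j)$, uncrossing does not increase the total distance (the standard exchange argument for optimal transport on a line). On the other hand, Lemma~\ref{lem:proper-simple} together with Proposition~\ref{prop:optimalway} guarantees that each token $t_i$ is slid along a path inside $[L(b_i),R(r_i)]$ whose length equals $\dist(b_i,r_i)$ in the unit-interval representation, and the partial-order processing ensures no interference from other tokens; hence the algorithm attains the lower bound $\sum_i\dist(b_i,r_i)$ and is therefore shortest. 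For part~(2), this sum is at most $k\cdot n=O(n^2)$, which bounds both the output size and the running time; the algorithm maintains a pointer along each token's precomputed shortest path and emits one slide in $O(1)$ time per step, while only the current independent set and the $k\le n$ pointers need be kept in memory, yielding $O(n)$ working space.

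The main obstacle is making the optimality argument precise in the graph-theoretic setting. The crucial ingredient is the inequality
$\dist(v_{i_1},v_{j_2})+\dist(v_{i_2},v_{j_1})\le \dist(v_{i_1},v_{j_1})+\dist(v_{i_2},v_{j_2})$
for indices $i_1<i_2\le j_2<j_1$, which is the graph analogue of the one-dimensional rearrangement inequality and must be derived from the monotonicity of graph distance along the linear order of unit intervals with some case analysis on adjacency. Once this is established, the minimum over bijections equals $\sum_i\dist(b_i,r_i)$, the algorithm attains it, and both claims of the proposition follow from the scan-based implementation sketched above.
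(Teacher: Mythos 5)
Your complexity analysis for parts (1) and (2) matches the paper's (one scan builds $S$, the heights of Eq.~(\ref{eq:height}) and the height-$0$ partition; the order is fixed only at boundaries of consecutive substrings; the output cost is the sum of the path lengths, which is $O(kn)=O(n^2)$, with $O(n)$ working space). The gap is in your optimality argument. You lower-bound the optimum by $\min_\pi \sum_{b\in\bfI_b}\dist(b,\pi(b))$ over all bijections $\pi\colon\bfI_b\to\bfI_r$ and claim the sorted matching attains this minimum via the uncrossing inequality $\dist(v_{i_1},v_{j_2})+\dist(v_{i_2},v_{j_1})\le \dist(v_{i_1},v_{j_1})+\dist(v_{i_2},v_{j_2})$ for $i_1<i_2\le j_2<j_1$. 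That inequality is false in proper interval graphs, even for legal instances. Take the proper interval graph on $v_0,\ldots,v_4$ given by unit intervals with left endpoints $-0.5,\,0,\,0.6,\,0.9,\,1.7$: then $v_0$ is adjacent only to $v_1$, the vertices $v_1,v_2,v_3$ form a triangle, and $v_4$ is adjacent only to $v_3$. With $\bfI_b=\{v_0,v_2\}$ and $\bfI_r=\{v_2,v_4\}$ (the paper allows $\bfI_b\cap\bfI_r\neq\emptyset$), the sorted matching costs $\dist(v_0,v_2)+\dist(v_2,v_4)=2+2=4$, while the crossed matching costs $\dist(v_0,v_4)+\dist(v_2,v_2)=3+0=3$; this fits exactly your index pattern and violates the inequality. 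Worse, the transportation bound itself is $3$ here while the true optimum is $4$ slides (the token on $v_0$ cannot move at all while a token sits on $v_2$, since its only neighbor $v_1$ is adjacent to $v_2$), so even a correct evaluation of $\min_\pi\sum\dist$ would not certify that the algorithm's sequence is shortest. Your plan attains the bound $\sum_i\dist(b_i,r_i)$, but that quantity is not, by your argument, a valid lower bound on all reconfiguration sequences.

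The missing idea --- implicit in the paper's choice of $\fallmap(b_i)=r_i$ --- is order preservation rather than uncrossing: tokens occupy pairwise disjoint intervals, and a slide moves a token to an interval intersecting its current one; an interval that intersects the old position and is disjoint from another token's interval lying to its right cannot lie to the right of that token's interval. Hence no slide ever exchanges the left-to-right order of two tokens, so every reconfiguration sequence from $\bfI_b$ to $\bfI_r$ realizes exactly the sorted assignment $b_i\mapsto r_i$, and each token must be slid at least $\dist(b_i,r_i)$ times. Since the algorithm slides each token along a shortest $b_i$--$r_i$ path and never moves it again, it attains this (now valid) lower bound $\sum_i\dist(b_i,r_i)$. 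With your exchange argument replaced by this invariant, the rest of your write-up goes through.
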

\begin{proof}
We first modify our algorithm % in Section~\ref{subsec:algo_proper} 
so that it finds a shortest reconfiguration sequence between $\bfI_b$ and $\bfI_r$. 
To do that, it suffices to slide each token $t_i$, $1 \le i \le k$, 
from the blue vertex $b_i$ to its corresponding red vertex $r_i$ along the shortest path between $b_i$ and $r_j$.  
% 	Let $I_{b_i}$ and $I_{r_i}$ be the intervals in $\calI$ that correspond to $b_i$ and $r_i$, respectively. 
We may assume without loss of generality that $L(b_i) < L(r_i)$, 
that is, the token $t_i$ will be slid from left to right.
Then, for the interval $b_i$, we choose an interval $I_{j} \in \calI$ 
such that $b_i \cap I_{j} \neq \emptyset$ and $L(I_j)$ is the maximum among all $I_{j'} \in \calI$. 
If $L(r_i)\le L(I_j)$, we can slide $t_i$ from $b_i$ to $r_i$ directly;
otherwise we slide $t_i$ to the vertex $I_j$, and repeat.
% 	We repeat this process; 
% intuitively, we slide the token to the rightmost interval up to $I_{i'}$.
	
We then prove the claim (1).
If we simply want to compute the ordering of tokens to be slid in a shortest reconfiguration sequence, 
it suffices to compute the partial order over the substrings $S_1, S_2, \ldots, S_h$ 
in the partition of the string $S$ at height $0$. 
It is not difficult to implement our algorithm in Section~\ref{subsec:algo_proper} 
to run in $O(n)$ time and $O(n)$ space.
Therefore, the claim (1) holds. 
	
We finally prove the claim (2).
Remember that each token $t_i$, $1 \le i \le k$, is slid along the shortest path from $b_i$ to $r_i$. 
Furthermore, once the token $t_i$ reaches $r_i$, it is not slid anymore. 
Therefore, the length of a shortest reconfiguration sequence 
between $\bfI_{b}$ and $\bfI_{r}$ is given by 
the sum of all lengths of the shortest paths between $b_i$ and $r_i$.
It is clear that this sum is $O(kn)=O(n^2)$.
We output only the shortest paths between $b_i$ and $r_i$, 
together with the ordering of the tokens to be slid. 
Therefore, the claim (2) holds.
\qed
\end{proof}

This proposition also completes the proof of Theorem~\ref{th:proper}.
\qed

It is remarkable that there exists an infinite family of instances 
for which any reconfiguration sequence requires $\Omega(n^2)$ length. 
To show this, we give an instance such that each shortest path between $b_i$ and $r_i$ is $\Theta(n)$.
Simple example is: $G$ is a path $(v_1,v_2,\ldots,v_{8k})$ of length $n=8k$ for any positive integer $k$,
$\bfI_b=\{v_1,v_3,v_5,\ldots,v_{2k-1}\}$, and $\bfI_r=\{v_{6k+2},v_{6k+4},\ldots,v_{8k}\}$.
In this instance, each token $t_i$ must be slid $\Theta(n)$ times, 
and hence it requires $\Theta(n^2)$ time to output all of them.
We note that a path is not only a proper interval graph, but also a caterpillar.
Thus this simple example also works as a caterpillar.

% Let $G$, $\bfI_b=\{b_1,\ldots,b_k\}$ and 
% $\bfI_r=\{r_1,\ldots,r_k\}$ be the input of the ISReconf problem.
% When $G$ is a connected proper interval graph, by Observation \ref{obs:noncross}, 
% we can see that each token on the vertex $b_i$ is slid to the vertex $r_i$.
% However, the ordering of the tokens is not so trivial.
% As a corollary of Theorem \ref{th:proper}, 
% we can show that the ordering can be computed efficiently.
% \begin{corollary}
% \label{cor:proper}
% Let $G$, $\bfI_b=\{b_1,\ldots,b_k\}$ and 
% $\bfI_r=\{r_1,\ldots,r_k\}$ be the input of the ISReconf problem.
% We assume that $G$ is a connected proper interval graph
% and hence $\bfI_{b}\vdash^* \bfI_{r}$ by Theorem \ref{th:proper}.
% Then 
% (1) the ordering of the slides of tokens can be computed in $O(n)$ time and $O(n)$ space,
% (2) all independent sets between $\bfI_{b}$ and $\bfI_{r}$ can be 
% computed in $O(n^2)$ time and $O(n)$ space.
% \end{corollary}

\section{Trivially perfect graphs}
\label{sec:trivially}

The main result of this section is the following theorem.
\begin{theorem}
\label{th:trivialperfect}
The {\sc sliding token} problem for 
a trivially perfect graph $G = (V,E)$ can be solved in $O(n)$ time and $O(n)$ space.
Furthermore, one can find a shortest reconfiguration sequence between 
two given independent sets $\bfI_b$ and $\bfI_r$ in $O(n)$ time and $O(n)$ space if there exists. 
\end{theorem}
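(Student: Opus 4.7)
The plan is to exploit the inductive structure of connected trivially perfect graphs: each such graph has a universal vertex, namely the outermost interval in its interval representation, and removing it leaves a disjoint union of smaller connected trivially perfect graphs. Equivalently, the containment poset of the intervals forms a rooted tree $F$ that captures $G$ completely: two vertices are adjacent in $G$ iff one is an ancestor of the other in $F$, so the independent sets of $G$ are precisely the antichains of $F$. The tree $F$ can be built from the interval representation in $O(n)$ time by a left-to-right endpoint sweep with a stack.

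Using $F$, I would establish the following characterization. For a node $v$ of $F$, tokens inside the subtree $T_v$ can leave $T_v$ only by sliding onto $v$ itself, but $v$ is adjacent to every other vertex of $T_v$, so ``ascending to $v$'' is blocked whenever any other token still lies in $T_v$. Writing $N_v(\bfI)=|\bfI\cap V(T_v)|$, I claim $\bfI_b \vdash^{*} \bfI_r$ iff $|\bfI_b|=|\bfI_r|$ and, for every node $v$ with $N_v(\bfI_b)\geq 2$ and every child $c$ of $v$, $N_c(\bfI_b)=N_c(\bfI_r)$. Necessity is the blocking argument just sketched; sufficiency is proved by induction on $|V(F)|$ using the recursive construction below.

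The reconfiguration algorithm descends $F$ from the root. At the current subtree $T_v$: if $\bfI_b \cap V(T_v)=\bfI_r\cap V(T_v)$, there is nothing to do; if $N_v(\bfI_b)\leq 1$, the (unique) token is ferried from its blue source to its red target via $v$ in at most two slides, since $v$ is adjacent to every vertex of $T_v$; otherwise the invariant forces the tokens to be partitioned among $v$'s child subtrees identically in $\bfI_b$ and $\bfI_r$, and the problem decomposes into independent subproblems on each child subtree, which cannot interfere because distinct child subtrees are pairwise non-adjacent in $G-v$. Each token is spent in exactly one size-at-most-one subinstance and contributes at most two slides, so the total length is $O(n)$. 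A single DFS builds $F$, a second checks the invariant and rejects infeasible instances, and a third emits the slides, giving $O(n)$ time and $O(n)$ space overall.

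The hard part will be proving that the emitted sequence is \emph{shortest}. The natural lower bound pairs each blue token with a red token consistently with the subtree invariants and sums $\dist_G(b,r)$ over the pairs; since every connected trivially perfect graph has diameter at most two, this distance is $0$, $1$, or $2$ for each pair. I would then show that any bijection respecting the forced child-subtree partitions yields the same total, so the greedy pairing used by the algorithm is optimal and no shorter sequence exists. Combining this with the feasibility characterization completes the proof of both claims of the theorem.
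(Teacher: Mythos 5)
Your feasibility argument is sound and runs parallel to the paper's, just phrased on a different tree: you work on the vertex-level containment forest $F$ (the comparability tree) with a counting criterion --- for every node $v$ whose subtree $T_v$ holds at least two tokens of $\bfI_b$, each child subtree must contain equally many blue and red tokens --- checked top-down, whereas the paper works bottom-up on the $\MPQ$-tree and characterizes feasibility via a target-assignment whose pairwise least common ancestors are incomparable (Lemma~\ref{lem:lca}, Cases (1)--(5)). The two criteria are equivalent, and your counting form is arguably easier to verify and makes the $O(n)$ bound transparent. One small inaccuracy: a token can leave $T_v$ not only through $v$ but also by sliding directly to a strict ancestor of $v$; since any such ancestor is likewise adjacent to all of $T_v$, your blocking conclusion and the invariance of the subtree counts survive, but the statement should be corrected. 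Also note your condition is stated asymmetrically (only for $N_v(\bfI_b)\ge 2$); for a connected graph the symmetric version follows by propagating a mismatch up to the root, and this should be said explicitly.

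The genuine gap is in the shortest-length claim. Your algorithm ferries the unique token of a bottomed-out subtree ``via $v$,'' where $v$ is the node at which the recursion stops, and $v$ can lie strictly above the least common ancestor of the blue--red pair. Concretely, take $F$ with root $v$, children $w_1,w_2$, and children $z_1,z_2$ of $w_1$ (this graph has no strong twins), with $\bfI_b=\{z_1\}$ and $\bfI_r=\{w_1\}$: your routine emits $z_1\to v\to w_1$ (two slides) although $z_1\to w_1$ is a legal single slide, so the emitted sequence is not shortest. This also breaks the optimality argument as written: the lower bound $\sum_i \dist(b_i,\fallmap(b_i))$ is valid (indeed, as you observe, the forced child-subtree partitions make the realized target-assignment unique), but it is attained only if every token travels along a shortest path, which via-$v$ routing does not guarantee. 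The repair is exactly the paper's routing: slide each token through a vertex of $\LCA(b_i,\fallmap(b_i))$, degenerating to a direct slide when source and target are adjacent; the same non-interference argument shows these moves are legal, each token then moves exactly $\dist(b_i,\fallmap(b_i))\in\{0,1,2\}$ times, and the lower bound is met with equality.
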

	
In this section, we explicitly give such an algorithm as a proof of Theorem~\ref{th:trivialperfect}.
Note that there are no-instances for trivially perfect graphs.
%for example, recall the instance in \figurename~\ref{fig:rules}. 
% (A star is a trivially perfect graph.) 
However, for trivially perfect graphs, we construct
a proper target-assignment between $\bfI_b$ and $\bfI_r$ efficiently if it exists.

% 	\begin{corollary}
% 	Let $\bfI_b$ and $\bfI_r$ be two independent sets of a trivially perfect graph $G$ with $n$ vertices, and suppose that $\bfI_b \vdash^* \bfI_r$. 
% 	Then, a minimum-length reconfiguration sequence between $\bfI_b$ and $\bfI_r$ has $O(n)$ length, and it can be found in $O(n)$ time and $O(n)$ space. 
% 	\end{corollary}

\subsection{$\MPQ$-tree for trivially perfect graphs}

The $\MPQ$-tree of an interval graph $G$ is a kind of decomposition tree, 
developed by Korte and M\"ohring \cite{KM89}, 
which represents the set of all feasible interval representations of $G$. 
%That is, 
For an interval graph $G$, 
although there are exponentially many interval representations for $G$,
its corresponding $\MPQ$-tree is unique up to isomorphism.
%
%That is, for a general interval graph $G$, 
%the $\MPQ$-tree of $G$ is unique, although $G$ has several interval representations in general. 
For the notion of $\MPQ$-trees, the following theorem is known:
\begin{theorem}[{\cite{KM89}}]\label{th:MPQ} %[{\cite[Theorem 2.1]{KM89}}]\label{th:MPQ}
For any interval graph $G=(V,E)$, its corresponding $\MPQ$-tree can be constructed in $O(n+m)$ time.
\end{theorem}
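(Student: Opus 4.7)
The plan is to build on top of the classical PQ-tree machinery of Booth and Lueker, which already solves the consecutive-ones problem and interval graph recognition in linear time, and then layer on the vertex-to-node assignment that turns a PQ-tree into an $\MPQ$-tree.

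First I would compute the collection of maximal cliques of $G$ together with a canonical ordering that witnesses $G$'s being an interval graph. Using a linear-time interval graph recognition routine (for instance a Lex\-BFS-based certifying algorithm), one obtains in $O(n+m)$ time both a perfect elimination order and the list of at most $n$ maximal cliques, each equipped with pointers to the vertices contained in it and vice versa. Second I would invoke the PQ-tree construction of Booth and Lueker on the clique-vertex incidence matrix: for each vertex $v$, the set of maximal cliques containing $v$ must form a consecutive block in every valid clique ordering, and the resulting PQ-tree represents exactly the admissible orderings. The amortised analysis of Booth and Lueker shows that this step runs in $O(n+m)$ time even though intermediate trees can change dramatically.

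Third I would convert the PQ-tree into the $\MPQ$-tree by assigning each vertex $v$ to the node (or Q-node section) whose associated subtree's leaves are precisely the maximal cliques containing $v$. Concretely I would perform one post-order pass over the PQ-tree, maintaining for each node the set (or interval, in the case of Q-node children) of maximal cliques in its subtree; a vertex $v$ then attaches to the lowest node whose clique set contains all cliques of $v$ and to a specific consecutive range of children at a Q-node when $v$ spans several subtrees. By charging each assignment to the clique-vertex incidences, the pass runs in $O(n+m)$ time and outputs the $\MPQ$-tree. Uniqueness up to isomorphism follows from the uniqueness of the underlying PQ-tree and the fact that each vertex has a well-defined minimal covering node.

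The hard part, and the one that genuinely needs care rather than black-box invocation, is the amortised analysis of the PQ-tree updates in the second step, which is classical but intricate. The third step is conceptually simple but requires a careful charging argument for Q-nodes so that assigning each vertex to its range of consecutive children, rather than walking the entire Q-node section, still totals $O(n+m)$; identifying the correct range via the leftmost and rightmost clique containing the vertex avoids any superlinear blow-up.
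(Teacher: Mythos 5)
Note first that the paper does not prove Theorem~\ref{th:MPQ} at all: it is imported verbatim from Korte and M\"ohring~\cite{KM89}, so there is no in-paper argument to compare against. Your sketch is a genuinely different route from the cited one. Korte and M\"ohring build the $\MPQ$-tree \emph{incrementally}, inserting vertices one at a time (essentially along a LexBFS/perfect elimination order) and updating the tree directly, with the amortized analysis living in those update operations; you instead run Booth--Lueker once on the clique--vertex incidence structure to get the PQ-tree of all valid maximal-clique orderings and then annotate it with vertices in a single post-order pass. Your route buys modularity (all the hard amortization is delegated to the classical PQ-tree reduction, and the annotation pass is a clean $O(n+m)$ charging argument over clique--vertex incidences, using that the total size of the maximal cliques of an interval graph is $O(n+m)$); the incremental route of \cite{KM89} buys an online algorithm and avoids materializing the PQ-tree as a separate object.

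The one step you treat as obvious but which actually carries the content of the $\MPQ$-tree is the well-definedness of the vertex assignment: you need the structural fact that, in the PQ-tree of an interval graph, for every vertex $v$ the set $C(v)$ of maximal cliques containing $v$ is either exactly the leaf set of some node, or exactly the union of the leaf sets of a \emph{consecutive} range of children of a Q-node (and nothing in between, e.g.\ a proper subset of the children of a P-node spanning two or more of them). Without this, ``attach $v$ to the lowest node whose clique set contains $C(v)$, restricted to a range of children'' is not guaranteed to reproduce the properties (a) and (b) the paper uses, in particular that ancestry in the tree corresponds to neighborhood/interval inclusion. The fact is true --- if it failed, some permutation allowed by the PQ-tree would make $C(v)$ non-consecutive, contradicting that the tree represents exactly the valid clique orderings --- but it must be stated and used explicitly, since it is also what justifies computing the attachment point as the least common ancestor of the leftmost and rightmost clique of $v$ in your charging argument. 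With that lemma added, your outline is a sound alternative derivation of the theorem; as written it silently assumes it.
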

%  \item[{\rm (}a{\rm )}] $\calT$ can be constructed from an interval representation of $G$ in $O(n)$ time, and represents the graph $G$ in $O(n)$ space, where $n = \msize{V}${\rm ;}
%  \item[{\rm (}b{\rm )}] each vertex $v \in V$ appears in exactly one $\PP$-node in $\calT${\rm ;} and
%  \item[{\rm (}c{\rm )}] if a vertex $v_i \in V$ is in an ancestor node of another node that contains $v_j \in V$, then $L(I_i)\le L(I_j)<R(I_j)\le R(I_i)$ in any interval representation of $G$. 
% \end{listing}

Since it is involved to define $\MPQ$-tree for general interval graphs,
we here give a simplified definition of $\MPQ$-tree only for the class of trivially perfect graphs.
(See \cite{KM89} for the detailed definition of the $\MPQ$-tree for a general interval graph.)
% (See \cite{BL76,LB79,CB81,KM89} for the definition of general $\MPQ$-tree.)
%
Let $G=(V,E)$ be a trivially perfect graph. Recall that a trivially perfect graph has 
an interval representation such that the relationship between any two intervals is either disjoint or inclusion.
Then, the {\em $\MPQ$-tree} $\calT$ of $G$ is a rooted tree such that each node, 
called a {\em $\PP$-node}, in $\calT$ is associated with a non-empty set of vertices in $G$ such that
(a) each vertex $v\in V$ appears in exactly one $\PP$-node in $\calT$, and
(b) if a vertex $v_i \in V$ is in an ancestor node of another node that contains $v_j \in V$, 
then $L(I_i)\le L(I_j)<R(I_j)\le R(I_i)$ in any interval representation of $G$, 
where $v_i$ and $v_j$ correspond to the intervals $I_i$ and $I_j$, respectively (see \figurename~\ref{fig:mpq} as an example).
%
%In the case, that is, 
%
By the property (b), the ancestor/descendant relationship on $\calT$ corresponds to 
the inclusion relationship in the interval representation of $G$.
% More formally, the {\em neighborhood} of a vertex $v$ in a graph $G=(V,E)$ is 
% the set of all vertices adjacent to $v$, and we denote it by $N(v) = \{u\in V \mid \{u,v\}\in E\}$. 
% Let $N[v] = N(v)\cup\{v\}$.
Thus, $N[v_j]\subseteq N[v_i]$ if $v_i$ is in an ancestor node of another node that contains $v_j$ in the $\MPQ$-tree.
%
% We here note that, for any given trivially perfect graph $G$,
% if $G$ is given in an interval representation,
% its corresponding MPQ-tree can be computed in $O(n)$ time and $O(n)$ space???
%

Let $\calT$ be the (unique) $\MPQ$-tree of a connected trivially perfect graph $G = (V,E)$. 
For two vertices $u$ and $w$ in $G$, 
we denote by $\LCA(u,w)$ the least common ancestor in $\calT$ for the nodes containing $u$ and $w$. 
By the property (a) the node $\LCA(u,w)$ can be uniquely defined.

\begin{figure}[t]
\begin{center}
\includegraphics[width=0.6\textwidth]{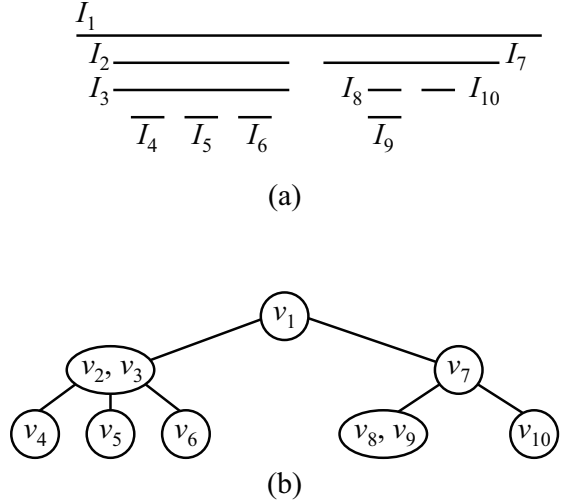}
\end{center}
%\vspace{-2em}
\caption{(a) A trivially perfect graph in an interval representation, and (b) its $\MPQ$-tree.}
\label{fig:mpq}
\end{figure}

\subsection{Basic properties and key lemma}
\label{subsec:keylemma}
	
Let $\calT$ be the (unique) $\MPQ$-tree of a connected trivially perfect graph $G = (V,E)$. 
Recall that the interval representation of a trivially perfect graph has just disjoint or inclusion relationship.
This fact implies the following observation. 
\begin{observation}
\label{obs:twosteps}
Every pair of vertices $u$ and $w$ in a connected trivially perfect graph $G$ has 
a path of length at most two via a vertex in $\LCA(u,w)$. 
\end{observation}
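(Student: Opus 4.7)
The plan is to exhibit an explicit vertex $v$ in the P-node $\LCA(u,w)$ and show that $v$ is adjacent to (or coincides with) both $u$ and $w$, which immediately gives a walk $u\text{--}v\text{--}w$ of length at most two and hence, after possibly collapsing when $v$ equals $u$ or $w$, the desired path.

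First, I would pick any vertex $v$ in the P-node $N := \LCA(u,w)$; such a vertex exists because every P-node is associated with a non-empty vertex set by property (a). Let $N_u$ and $N_w$ denote the P-nodes containing $u$ and $w$, respectively; by the definition of the least common ancestor, $N$ is an ancestor of (or equal to) both $N_u$ and $N_w$.

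Next, I would verify the edge between $v$ and $u$, and symmetrically between $v$ and $w$. If $N$ is a proper ancestor of $N_u$, then property (b) of the MPQ-tree applied to $v \in N$ and $u \in N_u$ yields $L(I_v)\le L(I_u) < R(I_u)\le R(I_v)$, so $I_u\cap I_v\supseteq [L(I_u),R(I_u)]$ is non-empty and $u$ is adjacent to $v$ in $G$. If instead $N_u = N$, so that $u$ and $v$ sit in the same P-node, I would invoke the discussion right after property (b): for a trivially perfect graph any two intervals are either disjoint or nested, so vertices placed in a common P-node must have pairwise overlapping intervals, and are therefore adjacent (or, under the no-strong-twin assumption together with the recursive universal-vertex decomposition of connected trivially perfect graphs, each P-node contains a single vertex and this case degenerates to $u = v$). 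The symmetric argument handles $w$ and $v$. Concatenating the two edges (or collapsing one when $u = v$ or $w = v$) produces a path of length at most two through $v \in \LCA(u,w)$.

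I expect the only real obstacle to be the same P-node case: once the convention for multi-vertex P-nodes in the MPQ-tree of a trivially perfect graph is pinned down, via either single-vertex P-nodes under the no-strong-twin assumption or pairwise interval overlap within a P-node, the ancestor case follows immediately from property (b) and the observation is purely structural.
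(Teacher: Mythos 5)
Your proof is correct and takes essentially the same route as the paper: pick any vertex $v$ in the (non-empty) $\PP$-node $\LCA(u,w)$, use property (b) — equivalently $N[u]\subseteq N[v]$ and $N[w]\subseteq N[v]$ — to obtain the edges $\{u,v\}$ and $\{v,w\}$, and let the path degenerate when $v$ coincides with $u$ or $w$. The only cosmetic differences are that the paper explicitly justifies the non-emptiness of every $\PP$-node (from connectivity and the uniqueness of the MPQ-tree), while you take it from the simplified definition, and that you treat the same-node case a bit more explicitly via the clique property of $\PP$-nodes.
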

\begin{proof}
We first observe that every $\PP$-node of $\calT$ is non-empty.
If the root is empty, the graph is disconnected, which is a contradiction.
If some non-root $\PP$-node $P$ is empty, joining all children of $P$ to the parent of $P$,
we obtain a simpler $\MPQ$-tree than $\calT$, which contradicts the construction of 
the unique $\MPQ$-tree in \cite{KM89}. Thus every $\PP$-node is non-empty.
Therefore, there is at least one vertex $v$ in $\LCA(u,w)$.
Then, the property (b) implies that $N[u]\subseteq N[v]$ and $N[w]\subseteq N[v]$, 
and hence $\{u,v\}$ and $\{w,v\}$ are both in $E$.
Therefore, there is a path $(u,v,w)$ of length at most two between $u$ and $w$ via $v$.
When we have $v=u$ or $v = w$, the path degenerates to the edge $\{u,w\} \in E$.
\qed
\end{proof}

Let $\LCA^*(u,w)$ be the set of vertices in $V$ appearing in 
the $\PP$-nodes on the (unique) path from $\LCA(u,w)$ to the root of the $\MPQ$-tree.
By the definition of $\MPQ$-tree, we clearly have the following observation. 
(Recall also that each token must be slid along an edge of $G$.)
\begin{observation}
\label{obs:lca_onetoken}
Consider an arbitrary reconfiguration sequence $\calS$ 
which slides a token $t_i$ from $b_i \in \bfI_b$ to some vertex $r_i$.
Then, $t_i$ must pass through at least one vertex in $\LCA^*(b_i, r_i)$, 
that is, there exists at least one independent set $\bfI'$ in $\calS$ 
such that $\bfI' \cap \LCA^*(b_i, r_i) \neq \emptyset$.
\end{observation}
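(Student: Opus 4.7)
\medskip
\noindent
\textbf{Proof plan.}
My plan is to work in the $\MPQ$-tree $\calT$ and argue by a ``subtree separation'' principle: every slide of the token $t_i$ corresponds to moving between two vertices whose $\PP$-nodes are comparable in $\calT$, so the walk of $t_i$ through $\calT$ cannot cross between incomparable subtrees without first visiting $\LCA(b_i,r_i)$ or one of its ancestors.

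The first step is to record the basic ``comparability'' lemma: for any edge $\{u,v\}\in E$, the $\PP$-nodes $N_u$ and $N_v$ containing $u$ and $v$ in $\calT$ are comparable (one is an ancestor of the other, possibly equal). This follows directly from property~(b) of the $\MPQ$-tree together with the fact that a trivially perfect graph has an interval representation in which any two intervals are either disjoint or nested: if $N_u$ and $N_v$ were in incomparable positions of $\calT$, the intervals for $u$ and $v$ would have to be disjoint, contradicting $\{u,v\}\in E$.

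Next, I would write out the walk of $t_i$ induced by $\calS$. For each independent set $\bfI_j$ in $\calS$ let $v^{(j)}$ be the vertex on which $t_i$ sits in $\bfI_j$; then $v^{(1)}=b_i$, $v^{(\ell)}=r_i$, and consecutive $v^{(j)},v^{(j+1)}$ are either equal or joined by an edge. Set $P^{*}=\LCA(b_i,r_i)$. If $b_i$ or $r_i$ already lies in $P^{*}$ there is nothing to prove, since $P^{*}\subseteq\LCA^{*}(b_i,r_i)$. Otherwise the $\PP$-nodes $N_{b_i}$ and $N_{r_i}$ are strict descendants of $P^{*}$ lying in two distinct child-subtrees of $P^{*}$. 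Let $j^{*}$ be the smallest index for which $N_{v^{(j^{*})}}$ is not a descendant of the same child of $P^{*}$ as $N_{b_i}$. Such an index exists because $N_{v^{(\ell)}}=N_{r_i}$ belongs to a different subtree, and $j^{*}\ge 2$.

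Now I apply the comparability lemma to the slide from $v^{(j^{*}-1)}$ to $v^{(j^{*})}$: the two $\PP$-nodes must be comparable, so one is an ancestor of the other. Since $N_{v^{(j^{*}-1)}}$ lies in the child-subtree of $P^{*}$ containing $N_{b_i}$ while $N_{v^{(j^{*})}}$ does not, the only way the two can be comparable is if $N_{v^{(j^{*})}}$ is an ancestor of $N_{v^{(j^{*}-1)}}$ at height at least $\mathrm{depth}(P^{*})$, i.e., $N_{v^{(j^{*})}}$ equals $P^{*}$ or lies strictly above it on the root path. This places $v^{(j^{*})}$ in $\LCA^{*}(b_i,r_i)$, and hence $\bfI'=\bfI_{j^{*}}$ is the desired independent set. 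The main obstacle I anticipate is carefully pinning down that ``incomparable subtrees are separated'' claim when slides may skip levels in $\calT$; the comparability lemma handles exactly this subtlety in a single line, which is why I would state and prove it first.
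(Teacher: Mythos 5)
Your argument is correct: the edge-comparability fact (adjacent vertices lie in $\PP$-nodes that are in ancestor/descendant relation) combined with tracking the walk of $t_i$ and taking the first step that leaves the child-subtree of $\LCA(b_i,r_i)$ containing $b_i$ is exactly the justification the paper has in mind, which it leaves implicit by stating the observation follows ``by the definition of $\MPQ$-tree'' and the fact that tokens slide along edges. Your write-up simply supplies the details the paper omits, so no further changes are needed.
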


We are now ready to give the key lemma for trivially perfect graphs.
\begin{lemma}
\label{lem:lca}
Let $\fallmap: \bfI_b \to \bfI_r$ be a target-assignment between $\bfI_b$ and $\bfI_r$. 
Then, $\fallmap$ is proper if and only if
the nodes $\LCA(b_i, \fallmap(b_i))$ and $\LCA(b_j, \fallmap(b_j))$ are not in 
the ancestor/descendant relationship on $\calT$ for every pair of vertices $b_i, b_j \in \bfI_b$.
\end{lemma}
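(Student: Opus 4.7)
The plan is to prove both directions of the biconditional, using the MPQ-tree adjacency rule (two vertices of a trivially perfect graph are adjacent iff their $\PP$-nodes are comparable in $\calT$) to convert the structural hypothesis on the LCAs into a conflict between tokens.

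For the \emph{if} direction I would note that pairwise incomparable LCAs $A_1,\ldots,A_k$ induce pairwise disjoint subtrees of $\calT$. I would then process the tokens one at a time in any order: for each $t_i$, I apply Observation~\ref{obs:twosteps} inside the subtree rooted at $A_i$ to slide $t_i$ from $b_i$ to $\fallmap(b_i)$ via a vertex of $A_i$ in at most two moves, keeping $t_i$ entirely inside that subtree. Every other token $t_j$ sits at $b_j$ or $\fallmap(b_j)$, both inside the disjoint subtree rooted at $A_j$; vertices in disjoint subtrees lie in incomparable $\PP$-nodes and hence are non-adjacent, so the independence of each intermediate configuration is preserved.

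For the \emph{only if} direction I would argue the contrapositive. Suppose, renaming if necessary, that $A_i$ is a non-strict ancestor of $A_j$, so the subtree rooted at $A_j$ sits inside the subtree rooted at $A_i$, and assume a reconfiguration sequence $\calS$ realizing $\fallmap$ exists; I will derive a contradiction. By Observation~\ref{obs:lca_onetoken}, the token $t_i$ visits some vertex of $\LCA^*(b_i, \fallmap(b_i))$ in $\calS$; let $\tau$ be the last such time and let $v$ be $t_i$'s position at $\tau$. The crucial intermediate claim I would establish is a \emph{trapping lemma}: after $\tau$, the token $t_i$ stays inside the subtree rooted at $A_i$. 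Indeed, any departure followed by the required return to $\fallmap(b_i)$ inside this subtree would use a re-entry edge whose outside endpoint is in a strict ancestor of $A_i$, hence in $\LCA^*(b_i, \fallmap(b_i))$, contradicting the maximality of $\tau$.

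Next I would locate $t_j$ at time $\tau$. Since $v$'s $\PP$-node is $A_i$ or an ancestor, the non-neighbors of $v$ lie in nodes incomparable to $v$'s node and thus outside the subtree of $A_i$; so $t_j$ is outside this subtree at time $\tau$, yet $\fallmap(b_j)$ lies in the subtree of $A_j$ and therefore in the subtree of $A_i$, forcing $t_j$ to re-enter the latter at some time $\tau'>\tau$. Just before the re-entry move, $t_j$ occupies a vertex $w$ whose $\PP$-node is a strict ancestor of $A_i$, while the trapping lemma places $t_i$ inside the subtree of $A_i$ at that same moment; the MPQ-tree adjacency rule then makes $w$ adjacent to $t_i$'s position, contradicting independence of the configuration at time $\tau'-1$. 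The main obstacle is the trapping lemma itself, since it requires a careful account of how a token can leave and later re-enter a subtree of $\calT$; once this is in hand, the adjacency bookkeeping delivers the contradiction uniformly, and the degenerate case $A_i=A_j$ is absorbed without modification.
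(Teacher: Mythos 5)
Your proof is correct, and its overall strategy matches the paper's. The sufficiency direction is essentially identical to the paper's: pairwise incomparable least common ancestors give pairwise disjoint subtrees whose vertices are mutually non-adjacent, and Observation~\ref{obs:twosteps} lets each token travel within its own subtree in at most two slides while all other tokens rest at $b_j$ or $\fallmap(b_j)$. For necessity, the obstruction you exploit is the same one the paper uses --- by Observation~\ref{obs:lca_onetoken} the token $t_i$ must visit $\LCA^*(b_i,\fallmap(b_i))$, whose vertices dominate the entire subtree rooted at $A_i=\LCA(b_i,\fallmap(b_i))$, inside which $t_j$ must start or finish --- but you organize it differently: the paper argues informally that $t_j$ would have to ``escape'' that region before $t_i$ slides through its chosen vertex $v_{i'}$, and that any escape vertex in an ancestor node is adjacent to $b_i$, $\fallmap(b_i)$ and $v_{i'}$, whereas you fix the last time $\tau$ at which $t_i$ occupies a vertex of $\LCA^*$, prove a trapping lemma, and obtain the conflict at $t_j$'s forced re-entry. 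Your temporal bookkeeping is in fact tighter than the paper's sketch; the only loose ends are two boundary cases your justifications do not literally cover: at time $\tau$ the token $t_i$ may itself sit in a strict ancestor of $A_i$ (so trapping should be stated as ``at every configuration strictly after $\tau$, $t_i$ lies in the subtree rooted at $A_i$,'' which your re-entry argument proves verbatim, covering also the case where $t_i$ is outside that subtree at time $\tau$), and the re-entry of $t_j$ may occur at the very step $\tau\to\tau+1$, in which case $w$ and $v$ lie in comparable $\PP$-nodes and are adjacent directly, with no appeal to trapping. Both patches use only the adjacency rule you already invoke, so they are cosmetic rather than substantive.
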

\begin{proof}
We first show the sufficiency. 
For a target-assignment $\fallmap$ between $\bfI_b$ and $\bfI_r$, 
suppose that the nodes $\LCA(b_i, \fallmap(b_i))$ and $\LCA(b_j, \fallmap(b_j))$ 
are not in the ancestor/descendant relationship on $\calT$ for every pair of vertices $b_i, b_j \in \bfI_b$. 
Then, we can simply slide the tokens one by one in an arbitrary order; 
by Observation~\ref{obs:twosteps} each token $t_i$, $1 \le i \le k$, 
can be slid along a path from $b_i$ to $\fallmap(b_i)$ via a vertex $v_{i'}$ in $\LCA(b_i, \fallmap(b_i))$. 
Note that there is no token $t_j$ adjacent to $v_{i'}$, 
because the nodes $\LCA(b_i, \fallmap(b_i))$ and $\LCA(b_j, \fallmap(b_j))$ are not in 
the ancestor/descendant relationship on $\calT$. 
Thus, there is a reconfiguration sequence between $\bfI_b$ and $\bfI_r$ according to $\fallmap$, 
and hence $\fallmap$ is proper.   
	
We then show the necessity. 
Suppose that $\fallmap$ is proper, 
and suppose for a contradiction that 
there exists a pair of vertices $b_i, b_j \in \bfI_b$ such that 
the nodes $\LCA(b_i, \fallmap(b_i))$ and $\LCA(b_j, \fallmap(b_j))$ are in 
the ancestor/descendant relationship on $\calT$; 
without loss of generality, we assume that $\LCA(b_i, \fallmap(b_i))$ is 
an ancestor of $\LCA(b_j, \fallmap(b_j))$.
Since $\fallmap$ is proper, 
there exists a reconfiguration sequence $\calS$ between $\bfI_b$ and $\bfI_r$ 
which slides the token $t_i$ from $b_i$ to $\fallmap(b_i)$ and also slides 
the token $t_j$ from $b_j$ to $\fallmap(b_j)$. 
By Observation~\ref{obs:lca_onetoken} 
there is at least one vertex $v_{i'}$ in $\LCA^*(b_i, \fallmap(b_i))$ which is passed through by $t_i$. 
Similarly, there is at least one vertex $v_{j'}$ in $\LCA^*(b_j, \fallmap(b_j))$ 
which is passed through by $t_j$.  
Let $P_i$ and $P_j$ be the $\PP$-nodes that contains $v_{i'}$ and $v_{j'}$, respectively.
Since $\LCA(b_i, \fallmap(b_i))$ and $\LCA(b_j, \fallmap(b_j))$ are in 
the ancestor/descendant relationship on $\calT$, so are $P_i$ and $P_j$.
First suppose $P_j$ is an ancestor of $P_i$.
Then we have 
$N[b_{j}] \subseteq N[v_{i'}]$, $N[\fallmap(b_j)]\subseteq N[v_{i'}]$ and $N[v_{j'}]\subseteq N[v_{i'}]$.
Therefore, if we slide $t_i$ via $v_{i'}$, then $t_i$ would be adjacent to 
the other token $t_j$ which is on one of the three vertices $b_{j}$, $\fallmap(b_j)$ and $v_{j'}$.
Thus, the token $t_j$ should ``escape'' from $b_j$ before sliding $t_i$.
However, we can establish the same argument for any descendant of $P_i$, 
and hence $t_j$ must escape to some vertex $u$ that is contained in an ancestor of $P_i$ at first.
However, the vertex $u$ is adjacent to all of $b_{i}, \fallmap(b_i), v_{i'}$, 
and hence $t_j$ cannot escape before sliding $t_i$.
This contradicts the assumption that $\calS$ slides the token $t_i$ 
from $b_i$ to $\fallmap(b_i)$ and also slides the token $t_j$ from $b_j$ to $\fallmap(b_j)$. 
The other case, $P_i$ is an ancestor of $P_i$, is symmetric.
\qed
\end{proof}

\subsection{Algorithm and its correctness}
\label{subsec:algo_trivially}
	
We now describe our linear-time algorithm for a trivially perfect graph.
Let $\calT$ be the $\MPQ$-tree of a connected trivially perfect graph $G = (V,E)$.
Let $\bfI_b=\{b_1,b_2,\ldots,b_k\}$ and $\bfI_r=\{r_1,r_2,\ldots,r_k\}$ be given 
initial and target independent sets of $G$, respectively. 
Then, we determine whether $\bfI_b \vdash^* \bfI_r$ as follows:
\begin{listing}{aaa}
 \item[(A)] construct some particular target-assignment $\fallmap^*$ 
	    between $\bfI_b$ and $\bfI_r$; and 
 \item[(B)] check whether $\fallmap^*$ is proper or not, using Lemma~\ref{lem:lca}. 
\end{listing}
We will show later in Lemma~\ref{lem:no_case} that 
it suffices to check only $\fallmap^*$ in order to determine 
whether $\bfI_b \vdash^* \bfI_r$ or not. 
Indeed, our linear-time algorithm executes (A) and (B) above at the same time, 
in the bottom-up manner based on $\calT$.
% To simplify the description, we call each vertex $c_i$ {\em green}. 
%\medskip

\paragraph{Description of the algorithm}

Remember that the vertex-set associated to each $\PP$-node in $\calT$ induces a clique in $G$. 
Therefore, for any independent set $\bfI$ of $G$, 
each $\PP$-node contains at most one vertex in $\bfI$, 
and hence contains at most one token. 
We put a ``blue token'' for each $\PP$-node containing a blue vertex in $\bfI_b$, 
and also put a ``red token'' for each $\PP$-node containing a red vertex in $\bfI_r$.
Note that a $\PP$-node may contain a pair of blue and red tokens. 
Our algorithm lifts up the tokens from the leaves to the root of $\calT$, 
and if a blue token $b$ meets a red token $r$ at their least common ancestor 
$\LCA(b,r)$ in $\calT$, then we replace them by a single ``green token.''
This corresponds to setting $\fallmap^*(b) = r$.
More precisely, at initialization step, 
the algorithm first collects all leaves of $\calT$ in a queue, which is called {\em frontier}.
The algorithm marks the nodes in the frontier, 
and lifts up each token to its parent $\PP$-node.
Each $\PP$-node $P$ is put into the frontier if its all children are marked,
and then, all children of $P$ are removed from the frontier after 
the following procedure at $P$:
\begin{listing}{aaa}
 \item[Case (1)] $P$ contains at most one token: the algorithm has nothing to do.
 \item[Case (2)] $P$ contains only one pair of blue token $b$ and red token $r$: 
	   the algorithm replaces them by a single green token, and let $\fallmap^*(b) = r$.
 \item[Case (3)] $P$ contains only green tokens: 
	    the algorithm replaces them by a single green token.
 \item[Case (4)] $P$ contains two or more blue tokens, or two or more red tokens: 
            the algorithm outputs ``no'' and halts
	    (that is, $\bfI_b\not\vdash^* \bfI_r$ in this case).
 \item[Case (5)] $P$ contains at least one green token and at least one blue or red token: 
	    the algorithm outputs ``no'' and halts
	    (that is, $\bfI_b\not\vdash^* \bfI_r$ in this case).
\end{listing}
Repeating this process, and the algorithm outputs ``yes'' 
if and only when the frontier contains only 
the root $\PP$-node $r$ of $\calT$ which is in one of Cases (1)--(3) above.

%\medskip
% \newpage
% 	\noindent
\paragraph{Correctness of the algorithm}

It is not difficult to implement our algorithm to run in $O(n)$ time and $O(n)$ space.
Therefore, we here prove the correctness of the algorithm.
%  in Appendix~\ref{subsec:correctness}.

We first show that $\bfI_b \vdash^* \bfI_r$ if the algorithm outputs ``yes.'' 
In this case, the algorithm is in Cases (1), (2), or (3) 
at each $\PP$-nodes in $\calT$ (including the root $r$).
Then, the target-assignment $\fallmap^*$ has been (completely) constructed: 
for each blue vertex $b_i \in \bfI_b$, 
$\fallmap^*(b_i)$ is the red vertex in $\bfI_r$ such that 
$\LCA(b_i, v_{i'})$ has the minimum height in $\calT$ among 
all vertices $v_{i'} \in \bfI_r$. 
Then, we have the following lemma. 
\begin{lemma} 
\label{lem:yes_case}
If the algorithm outputs ``yes,'' then $\bfI_b \vdash^* \bfI_r$. 
\end{lemma}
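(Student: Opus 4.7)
The plan is to reduce the statement to Lemma~\ref{lem:lca}: if I can show that the target-assignment $\fallmap^*$ produced by the algorithm is proper, then $\bfI_b \vdash^* \bfI_r$ follows immediately. Concretely, I need to verify that for every pair of blue vertices $b_i, b_j \in \bfI_b$, the nodes $\LCA(b_i, \fallmap^*(b_i))$ and $\LCA(b_j, \fallmap^*(b_j))$ are incomparable in $\calT$ (i.e., neither is an ancestor of the other).

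First I would establish a simple structural observation: whenever the algorithm invokes Case~(2) at a $\PP$-node $P$, pairing a blue token on $b$ with a red token on $r$ and recording $\fallmap^*(b) = r$, the node $P$ is precisely $\LCA(b, r)$. This is because the blue token only ever sits at the $\PP$-node containing $b$ or one of its proper ancestors (similarly for $r$), so the first node at which both arrive is their least common ancestor. With this in hand, the LCAs of interest are exactly the $\PP$-nodes at which Case~(2) fires.

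Next I would argue by contradiction. Suppose the algorithm outputs ``yes'' but that $P_j := \LCA(b_j, \fallmap^*(b_j))$ is a strict ancestor of $P_i := \LCA(b_i, \fallmap^*(b_i))$. When $P_i$ is processed, Case~(2) creates a green token at $P_i$. Because the bottom-up sweep only destroys tokens in Cases~(4) and~(5), which would cause the algorithm to halt with ``no,'' the green token (possibly after being merged with other greens by Case~(3)) persists through every ancestor of $P_i$ visited en route to the root. In particular, a green token is present at $P_j$ when $P_j$ is processed. But, by the observation above, the blue token on $b_j$ and the red token on $\fallmap^*(b_j)$ have also risen to $P_j$ by that moment. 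Hence $P_j$ contains a green token together with at least one blue or red token, triggering Case~(5) and forcing the algorithm to output ``no''---a contradiction. By Lemma~\ref{lem:lca}, $\fallmap^*$ is proper, and therefore $\bfI_b \vdash^* \bfI_r$.

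The main obstacle I anticipate is the bookkeeping in the persistence argument: one must be sure that, between $P_i$ and $P_j$, the green token produced at $P_i$ cannot silently vanish and that no intermediate conflict has already been detected (in which case the ``yes'' output would already be contradicted). This should follow from the fact that Case~(3) only merges greens without reducing their count-as-witness, and that any other outcome at an intermediate node lies in Cases~(4) or~(5), which halt the algorithm. Aside from this, the argument is a direct application of Lemma~\ref{lem:lca} to the $\fallmap^*$ implicitly defined by the pairings.
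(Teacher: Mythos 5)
Your proof is correct and takes essentially the same route as the paper: both reduce the claim to Lemma~\ref{lem:lca} and then verify, from the bottom-up sweep, that each Case~(2) merge happens exactly at $\LCA(b_i,\fallmap^*(b_i))$ and that these pairing nodes are pairwise non-comparable in $\calT$. The only difference is packaging --- the paper argues directly that merged pairs come from disjoint (sibling) subtrees, while you argue the contrapositive via the persistence of green tokens forcing Case~(5) at the higher pairing node --- but both rest on the same structural facts about the algorithm.
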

\begin{proof}
By Lemma~\ref{lem:lca} it suffices to show that 
the target-assignment $\fallmap^*$ constructed by 
the algorithm satisfies that the nodes $\LCA(b_i, \fallmap^*(b_i))$ and
$\LCA(b_j, \fallmap^*(b_j))$ are not in the ancestor/descendant relationship 
on $\calT$ for every pair of vertices $b_i, b_j \in \bfI_b$.

We first consider the case where a $\PP$-node $P$ is in Case (2). 
Then, there is exactly one pair of a blue token $b$ and 
a red token $r = \fallmap^*(b)$, and $P=\LCA(b,r)$. 
Since $b$ and $r$ did not meet any other tokens before $P$, 
the subtree $\calT_P$ of $\calT$ contains only the two tokens $b$ and $r$. 
Therefore, the lemma clearly holds for $\calT_P$.

We then consider the case where a $\PP$-node $P$ is in Case (3). 
Then, two or more least common ancestors of pairs of blue and red tokens meet at this node $P$. 
Notice that the green tokens were placed on children's node of $P$ in 
the previous step of the algorithm, and hence they were sibling in $\calT$.
Therefore, their corresponding least common ancestors are not in 
the ancestor/descendant relationship on $\calT$. 
\qed
\end{proof}

The following lemma completes the correctness proof of our algorithm. 
\begin{lemma} 
\label{lem:no_case}
If the algorithm outputs ``no,'' then $\bfI_b \not\vdash^* \bfI_r$. 
\end{lemma}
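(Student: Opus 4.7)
The plan is to prove the contrapositive: assuming the algorithm outputs ``no'' at some $\PP$-node $P$, I will show that for every target-assignment $\fallmap\colon \bfI_b \to \bfI_r$ there exist vertices $b, b' \in \bfI_b$ whose LCA-nodes $\LCA(b, \fallmap(b))$ and $\LCA(b', \fallmap(b'))$ are in the ancestor/descendant relation on $\calT$. By Lemma~\ref{lem:lca} this forces $\fallmap$ to be improper, so $\bfI_b \not\vdash^* \bfI_r$.

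The first step is to prove, by induction on the $\MPQ$-tree processed bottom-up, a structural invariant for every $\PP$-node $Q$ that the algorithm handles without failure: the single token it lifts to its parent completely determines the content of $\calT_Q$. Specifically, a blue (resp.\ red) output forces $\calT_Q$ to contain exactly one vertex of $\bfI_b \cup \bfI_r$, which is blue (resp.\ red); a green output forces $|\bfI_b \cap \calT_Q| = |\bfI_r \cap \calT_Q| \ge 1$; and the absence of an output forces $\calT_Q$ to contain no vertex of $\bfI_b \cup \bfI_r$. The inductive step is a direct case analysis through the algorithm's five rules, relying on the fact that Cases~(4) and~(5) are exactly what rules out any other combination from propagating upward.

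With the invariant in hand, I would handle the two failure modes. In Case~(4), suppose $P$ collects at least two blue tokens (red being symmetric). Each arises from a distinct ``source''---either a child $c$ of $P$ whose processing yielded a blue token, in which case $\calT_c$ contains a single blue vertex $b^*$ and no red vertex by the invariant, or $P$ itself if $P$ contains a blue vertex. In either setting, any $\fallmap$ must send $b^*$ to a red lying outside that source region, so $\LCA(b^*, \fallmap(b^*))$ sits at $P$ or a proper ancestor of $P$. Two such sources give two distinct $b^*$'s, and their LCAs---both on the chain from $P$ to the root of $\calT$---are comparable. In Case~(5), $P$ receives a green token from a child $S_g$ together with (WLOG) a blue token from a source $S_b$; the blue source yields $b^*$ with $\LCA(b^*, \fallmap(b^*))$ at $P$ or above exactly as before, while the invariant applied to $S_g$ gives $|\bfI_b \cap \calT_{S_g}| = |\bfI_r \cap \calT_{S_g}|$. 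If some $b' \in \bfI_b \cap \calT_{S_g}$ has $\fallmap(b') \notin \calT_{S_g}$, then $\LCA(b', \fallmap(b'))$ is also at $P$ or above and comparable with $b^*$'s; otherwise every blue of $\calT_{S_g}$ is paired inside $\calT_{S_g}$, so any such pair's LCA sits inside $\calT_{S_g}$, a proper descendant of $P$ and hence of $b^*$'s LCA, and again comparable.

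The main obstacle I expect is formulating and proving the structural invariant of the first step cleanly, especially the uniform treatment of the case where $P$ itself supplies a blue or red token so that the relevant ``source region'' degenerates to the single vertex $P$; once the invariant is in place, the rest of the argument reduces to the trivial observation that the ancestors of a fixed node form a chain in $\calT$.
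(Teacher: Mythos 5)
Your proposal is correct and follows essentially the same route as the paper's own proof: reduce the question to Lemma~\ref{lem:lca} and show that the $\PP$-node triggering Case~(4) or Case~(5) forces, for \emph{every} target-assignment $\fallmap$, two blue vertices whose LCA-nodes lie on a common root-to-$P$ chain and are hence comparable. Your explicit subtree invariant and the two-way split in Case~(5) (whether the blues of the green child's subtree are matched inside or outside $\calT_{S_g}$) in fact make rigorous a step the paper only sketches, since the paper identifies the green token's originating pair with $b_j$ and $\fallmap'(b_j)$, which need not hold for an arbitrary $\fallmap'$.
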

\begin{proof}
We assume that the algorithm outputs ``no.''
Then, by Lemma~\ref{lem:lca}, it suffices to show that 
there is no target-assignment $\fallmap$ between 
$\bfI_b$ and $\bfI_r$ such that $\LCA(b_i, \fallmap(b_i))$ and 
$\LCA(b_j, \fallmap(b_j))$ are not in the ancestor/descendant 
relationship on $\calT$ for every pair of vertices $b_i, b_j \in \bfI_b$.

Suppose for a contradiction that 
the algorithm outputs ``no,'' but there exists 
a target-assignment $\fallmap'$ between $\bfI_b$ and $\bfI_r$ 
such that $\LCA(b_i, \fallmap'(b_i))$ and $\LCA(b_j, \fallmap'(b_j))$ 
are not in the ancestor/descendant relationship on $\calT$ 
for every pair of vertices $b_i, b_j \in \bfI_b$.
Then, by Lemma~\ref{lem:lca}, $\fallmap'$ is proper and hence $\bfI_b\vdash^* \bfI_r$.
Since the algorithm outputs ``no,'' 
there is a $\PP$-node $P$ which is in either Case (4) or (5). 
	
We first assume that the $\PP$-node $P$ is in Case (4).
Without loss of generality, 
at least two blue tokens $b_1$ and $b_2$ meet at this node $P$.
Then, the $\MPQ$-tree $\calT$ contains two red tokens $r_1$ and $r_2$ 
placed on $\fallmap'(b_1)$ and $\fallmap'(b_2)$, respectively.
Notice that, since $b_1$ and $b_2$ did not meet any red token before 
at the node $P$, 
both $r_1$ and $r_2$ must be placed on either $P$ or nodes in $\calT \setminus \calT_P$.
Then, the least common ancestor $\LCA(b_1, \fallmap'(b_1))$ must be an ancestor of $P$, 
and so is the least common ancestor $\LCA(b_2, \fallmap'(b_2))$.
Therefore, the nodes $\LCA(b_1, \fallmap'(b_1))$ and $\LCA(b_2, \fallmap'(b_2))$ 
are in the ancestor/descendant relationship, a contradiction. 

Thus, the algorithm outputs ``no'' because the $\PP$-node $P$ is in Case (5). 
In this case, without loss of generality, 
at least one blue token $b_1$ and at least one green token $c_1$ meet at this node $P$.
Then, the red token $r_1$ corresponding to $\fallmap'(b_1)$ must be placed on 
either $P$ or some node in $\calT \setminus \calT_P$.
Therefore, the least common ancestor $\LCA(b_1, \fallmap'(b_1))$ is an ancestor of $c_1$.
Note that $c_1$ corresponds to the least common ancestor of some pair of blue and red tokens, 
say $b_j$ and $\fallmap'(b_j)$, and $p$ is an ancestor of it.
Therefore, the nodes $\LCA(b_1, \fallmap'(b_1))$ and $\LCA(b_j, \fallmap'(b_j))$ are 
in the ancestor/descendant relationship on $\calT$, a contradiction. 
\qed
\end{proof}

\subsection{Shortest reconfiguration sequence}

To complete the proof of Theorem~\ref{th:trivialperfect}, 
we finally show that our algorithm in Section~\ref{subsec:algo_trivially} 
can be modified so that it actually finds a shortest reconfiguration sequence between $\bfI_b$ and $\bfI_r$. 

Once we know that $\bfI_b \vdash^* \bfI_r$ holds by the $O(n)$-time algorithm 
in Section~\ref{subsec:algo_trivially}, 
we run it again with modification that ``green'' tokens are left at 
the corresponding least common ancestors.
As in the proof of Lemma~\ref{lem:lca}, we can now obtain a reconfiguration sequence 
$\calS = \langle \bfI_1, \bfI_2, \ldots, \bfI_{\ell} \rangle$ 
between $\bfI_b = \bfI_1$ and $\bfI_r = \bfI_{\ell}$ such that 
each token $t_i$, $1 \le i \le k$, is slid at most twice.
It is sufficient to output $\bfI_{i+1}\setminus \bfI_i$ and $\bfI_{i}\setminus \bfI_{i+1}$, 
and hence the running time of the modified algorithm is proportional to $\ell$, 
the number of independent sets in $\calS$. 
Since $k=\msize{\bfI_b}=O(n)$ and each token $t_i$, $1 \le i \le k$, 
is slid at most twice in $\calS$, 
we have $\ell = O(n)$, that is, the length $\ell$ of $\calS$ is $O(n)$.
Therefore, the modified algorithm also runs in $O(n)$ time and $O(n)$ space.
Notice that each token $t_i$ is slid to its target vertex 
$\fallmap^*(b_i)$ along a shortest path (of length at most two) 
between $b_i$ and $\fallmap^*(b_i)$ without detour, and hence $\calS$ has the minimum length. 

This completes the proof of Theorem~\ref{th:trivialperfect}.
%\qed

\section{Caterpillars}
\label{sec:caterpillars}

The main result of this section is the following theorem. 
\begin{theorem}
\label{th:caterpillar}
The {\sc sliding token} problem for a connected caterpillar $G = (V,E)$ 
and two independent sets $\bfI_b$ and $\bfI_r$ of $G$ can be solved in $O(n)$ time and $O(n)$ space.
Moreover, for a yes-instance, 
a shortest reconfiguration sequence between them can be output in $O(n^2)$ time and $O(n)$ space.
\end{theorem}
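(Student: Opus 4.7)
The plan is to split the theorem into two parts: an $O(n)$-time decision procedure, and an $O(n^2)$-time extension that outputs a shortest sequence. Throughout, I work with the spine $(s_1,\ldots,s_{n'})$ and the leaves attached to each $s_i$; after normalizing weak twins among leaves of a common spine vertex (only the count of tokens on such leaves matters), every leaf can be treated as an interchangeable ``parking slot'' of its spine neighbor.

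For the decision part, I would give a direct left-to-right scan of the spine rather than invoking the general tree algorithm of~\cite{DDFEHIOOUY2015}. At each $s_i$ I record whether $s_i$ itself is blue or red, how many of its leaves are blue and how many are red, and a running ``height''-style counter (analogous to Eq.~(\ref{eq:height})) measuring the cumulative imbalance between blue and red tokens in $s_1,\ldots,s_i$ and their leaves. A local case analysis at each spine vertex (whether its spine token, if any, is pinned, whether leaves can absorb imbalance, whether a token on $s_i$ blocks $s_{i\pm 1}$) yields necessary and sufficient conditions on this scan for $\bfI_b \vdash^* \bfI_r$. Each spine vertex contributes $O(1+|\text{leaves}|)$ work, so the total running time is $O(n)$.

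For the shortest sequence on a yes-instance, I would construct a proper target-assignment $\fallmap:\bfI_b\to\bfI_r$ by pairing blue and red tokens in the left-to-right order along the spine (associating each token with the index of its spine neighbor, treating a spine token as its own neighbor), exactly as in Eq.~(\ref{eq:map_proper}). A standard uncrossing/exchange argument shows this pairing minimizes the total travel distance. To realize $\fallmap$, I would schedule the slides greedily along the spine, using a partial order like that of Section~\ref{subsec:algo_proper}; whenever two tokens would meet, the one in the way parks on a leaf of its current spine vertex (the caterpillar's defining feature guarantees a leaf exists at each spine endpoint, and internal spine vertices that block can be shown to carry a leaf too, otherwise the instance is rejected at the decision stage). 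Each token thus slides $O(n)$ times in total: $O(n)$ spine steps plus a constant number of park/unpark moves per blocking event it encounters, giving a sequence of length $O(n^2)$.

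The main obstacle is proving that this sequence is actually shortest. For this I would define a potential consisting of (i) the shortest-path distance from each token to its $\fallmap$-target in $G$, and (ii) a ``crossing charge'' counting, for each ordered pair of tokens whose spine intervals overlap in $\fallmap$, one mandatory detour step. Any single slide decreases the potential by at most one, and I would show that every park step in my algorithm corresponds to a forced crossing that any valid reconfiguration must also pay for. Matching this lower bound with the upper bound yields optimality. The delicate case is when several tokens cluster at a single spine vertex with few leaves: a careful amortization over the involved leaves, together with the decision invariants computed in the first phase, is needed to show the construction remains tight. Once this amortization is established, outputting the explicit sequence takes time proportional to its length, which is $O(n^2)$, completing the proof.
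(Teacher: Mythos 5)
There are two genuine gaps. First, your decision procedure is only asserted: you claim that a left-to-right scan maintaining a blue/red ``imbalance'' counter plus some local case analysis gives necessary and sufficient conditions for $\bfI_b \vdash^* \bfI_r$, but you never state what those conditions are or why they characterize reconfigurability. The actual characterization (which the paper imports from the tree result \cite{DDFEHIOOUY2015}, Theorem~\ref{th:cat-fix}) is structural, not count-based: an instance is a no-instance exactly when the \emph{locked paths} --- odd alternating token paths whose endpoints are leaves and whose interior token-free spine vertices have degree $2$ --- of $\bfI_b$ and $\bfI_r$ fail to coincide, or when, after cutting the caterpillar along the common locked paths, some component has unequal numbers of blue and red tokens. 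A cumulative imbalance counter in the style of Eq.~(\ref{eq:height}) does not detect this rigidity (it depends on degrees and parity of the alternation, not on a prefix count), so both the ``necessary'' and the ``sufficient'' direction of your scan are unproven. The paper also uses an $O(n)$ scan, but its scan is a small state machine explicitly recognizing locked paths, which is the notion you never isolate.

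Second, in the shortest-sequence part you assume away exactly the hard case. You write that when a blocking internal spine vertex has no leaf, ``the instance is rejected at the decision stage''; this is false. The paper's most delicate situation (its Case~(5)) is a \emph{yes}-instance in which a token starts and ends on a spine vertex $s_i$ with no leaf $\ell_i$: that token must make a detour of length four (right, then twice left, then back), and it can trigger a cascade of further unavoidable detours among the tokens to its right, terminating only because no locked path exists (Lemma~\ref{lem:cat2}). Similarly, the interaction in \figurename~\ref{fig:rl} forces a detour between an R-token heading to a leaf and an L-token heading to the adjacent spine vertex even though neither token is ``parked'' in your sense. Your lower-bound scheme (distance-to-target potential plus a crossing charge, each slide decreasing it by at most one) is a reasonable outline, but the step that every park/detour your schedule performs is matched by a forced unit of the lower bound --- precisely in these cascading and no-leaf configurations --- is the crux, and you explicitly defer it (``a careful amortization \ldots is needed''). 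The paper's proof consists essentially of that case analysis showing each detour is unavoidable; without it, optimality of your $O(n^2)$-length sequence is not established, even though the target-assignment $\fallmap(b_i)=r_i$ itself is correct (indeed forced, since tokens cannot bypass one another on a caterpillar).
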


Let $G=(S\cup L,E)$ be a caterpillar with spine $S$ which induces 
the path $(s_1,\ldots,s_{n'})$, and leaf set $L$.
We assume that ${n'}\ge 2$, $\deg(s_1)\ge 2$, and $\deg(s_{n'})\ge 2$. 
First we show that we can assume that each spine vertex has at most one leaf without loss of generality.

\begin{lemma}
\label{lem:cat}
For any given caterpillar $G=(S\cup L,E)$ and two independent sets $\bfI_b$ and $\bfI_r$ on $G$,
there is a linear time reduction from them to another caterpillar 
$G'=(S'\cup L',E')$ and two independent sets $\bfI_b'$ and $\bfI_r'$ such that 
(1) $G$, $\bfI_b$, and $\bfI_r$ are a yes-instance of the {\sc sliding token} problem 
if and only if $G'$, $\bfI_b'$, and $\bfI_r$ are a yes-instance of the {\sc sliding token} problem,
(2) the maximum degree of $G'$ is at most 3, and 
(3) $\deg(s_1)=\deg(s_{n'})=2$, where $n'=\msize{S'}$.
In other words, the {\sc sliding token} problem on a caterpillar 
is sufficient to consider only caterpillars of maximum degree 3.
\end{lemma}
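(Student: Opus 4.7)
The plan is to transform $G$ into $G'$ by locally simplifying, for each spine vertex $s_i$, the set of leaves attached to it. Let $d_i$ denote the number of leaves of $s_i$, and write $c_b^i$ and $c_r^i$ for the numbers of those leaves lying in $\bfI_b$ and $\bfI_r$ respectively. Two easy observations drive the reduction. First, the leaves of a common spine vertex are pairwise weak twins (they share neighborhood $\{s_i\}$), so the instance is invariant under any permutation of them and only the pair $(c_b^i, c_r^i)$ matters. Second, whenever $c_b^i \ge 2$ (resp.\ $c_r^i \ge 2$), every token on a leaf of $s_i$ is frozen forever: its unique neighbor $s_i$ is forbidden from receiving a token while some other leaf already carries one. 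Consequently, if $\max(c_b^i,c_r^i)\ge 2$, then $c_b^i=c_r^i$ is necessary for a yes-instance; when this fails, the reduction immediately outputs a canonical no-instance on a degree-$3$ caterpillar.

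The reduction then sweeps the spine and, at each vertex $s_i$ with $d_i\ge 2$, performs one of two local replacements. (a) If $\max(c_b^i,c_r^i)\le 1$, all $d_i$ leaves are merged into a single leaf $\ell_i^*$, with $\ell_i^*\in\bfI_b'$ iff $c_b^i=1$ and $\ell_i^*\in\bfI_r'$ iff $c_r^i=1$; by the weak-twin symmetry every schedule on $G$ can be mimicked on $G'$ (with $\ell_i^*$ acting as the unique parking slot at $s_i$) and conversely. (b) If $c_b^i=c_r^i\ge 2$, the spine is effectively cut at $s_i$, because $s_i$ can never hold a token and no token can pass through it; I would split the caterpillar at $s_i$, apply the reduction recursively to each side, and glue the two halves back together with a small barrier gadget consisting of two consecutive spine vertices, each carrying one leaf whose single token lies in both $\bfI_b'$ and $\bfI_r'$. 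The endpoint condition $\deg(s_1)=\deg(s_{n'})=2$ then holds automatically, since each endpoint originally has at least one leaf (by the paper's convention) and step~(a) leaves exactly one, while the barrier gadget itself has degree $2$ at its outer vertex. A single pass over the spine suffices, so the whole reduction runs in $O(n)$ time.

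The main obstacle will be verifying case~(b): I need to show that the two-leaf barrier gadget is genuinely impermeable, i.e.\ no reconfiguration of $G'$ can move any token across it, so that the two halves remain independent. I expect a short case analysis on the possible adjacencies inside the gadget's four-vertex induced path — checking that any attempt to bring a token onto either of the two gadget spine vertices forces one of the locked leaf-tokens to vacate its leaf, which contradicts the frozen-leaf observation whenever a foreign token sits at the adjacent original spine neighbor — to suffice. Case~(a), by contrast, reduces directly to the weak-twin symmetry and should be immediate.
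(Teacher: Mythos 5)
Your overall strategy is the same as the paper's: exploit the fact that the leaves of a common spine vertex are weak twins, observe that two or more leaf tokens at one spine vertex are permanently frozen, collapse the remaining leaves to a single leaf, and split the instance at blocked spine vertices. However, the concrete test you apply in case~(b) is wrong. Writing $L_i$ for the set of leaves of $s_i$, when $\max(c_b^i,c_r^i)\ge 2$ the frozen tokens can never leave the \emph{particular} leaves they occupy, so the correct necessary condition is the set equality $\bfI_b\cap L_i=\bfI_r\cap L_i$ --- which is what the paper tests (``if $\bfI_r$ contains these two tokens also'') --- and not the cardinality equality $c_b^i=c_r^i$. If $L_i=\{\ell_1,\ell_2,\ell_3\}$ with $\bfI_b\cap L_i=\{\ell_1,\ell_2\}$ and $\bfI_r\cap L_i=\{\ell_1,\ell_3\}$, then $c_b^i=c_r^i=2$ but the instance is a no-instance, while your reduction cuts at $s_i$ and may report yes. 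The weak-twin invariance you invoke cannot repair this: a permutation of $L_i$ is an isomorphism that acts on the \emph{pair} $(\bfI_b,\bfI_r)$ simultaneously, so it cannot be used to align $\bfI_b\cap L_i$ and $\bfI_r\cap L_i$ independently; ``only the pair $(c_b^i,c_r^i)$ matters'' is false once tokens are frozen.

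Two further issues. First, in case~(a) the clause ``and conversely'' is precisely the delicate direction and is false as stated when the single blue leaf and the single red leaf at $s_i$ are distinct: take the spine $(s_1,s_2,s_3)$ with leaves $\ell_1^a,\ell_1^b$ on $s_1$ and $\ell_3$ on $s_3$, $\bfI_b=\{\ell_1^a,s_2,\ell_3\}$, $\bfI_r=\{\ell_1^b,s_2,\ell_3\}$; every token is blocked, so this is a no-instance, yet your merge yields $\bfI_b'=\bfI_r'$, a trivial yes-instance. (The paper's own wording, ``at most one leaf \ldots{} is used,'' is admittedly terse on this point, but your version commits to the false equivalence explicitly; the merge is only sound once the locked-path analysis of Lemma~\ref{lem:cat2} guarantees that the leaf token can actually reach $s_i$.) Second, your barrier gadget is both unnecessary --- the paper simply splits the instance into independent subproblems, and disconnected instances are handled componentwise in Lemma~\ref{lem:cat2} --- and not impermeable as designed: with tokens only on the leaves $\ell,\ell'$ of two \emph{adjacent} spine vertices $u,u'$, the path $(\ell,u,u',\ell')$ has an even number of vertices and is not locked, and the token on $\ell$ may legally slide to $u$ whenever the outer neighbour of $u$ is free, after which gadget tokens and foreign tokens can cross. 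A genuinely impermeable separator must be an odd locked path, e.g.\ three consecutive spine vertices with tokens on the two outer leaves and on the middle spine vertex of degree~$2$.
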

\begin{proof}
On $G$, let $s_i$ be any vertex in $S$ with $\deg(s_i)>3$.
Then there exist at least two leaves $\ell_i$ and $\ell_i'$ attached to $s_i$
(note that they are weak twins).
Now we consider the case that two tokens in $\bfI_b$ are on $\ell_i$ and $\ell_i'$.
Then, we cannot slide these two tokens at all, and any other token cannot pass through $s_i$ since 
 it is blocked by them.
If $\bfI_r$ contains these two tokens also, 
we can split the problem into two subproblems by removing $s_i$ and 
its leaves from $G$, and solve it separately.
Otherwise, the answer is ``no'' (remind that the problem is reversible; 
that is, if tokens cannot be slid, there are no other tokens which slide into the situation).
Therefore, if at least two tokens are placed on the leaves of 
 a vertex of the original graph, we can reduce the case in linear time.
Thus we assume that every spine vertex with its leaves contains 
at most one token in $\bfI_b$ and $\bfI_r$, respectively.
Then, by the same reason, we can remove all leaves but one of each spine vertex.
More precisely, regardless whether $\bfI_b \vdash^* \bfI_r$ or $\bfI_b \not\vdash^* \bfI_r$,
at most one leaf for each spine vertex is used for the transitions.
Therefore, we can remove all other useless leaves but one from each spine vertex.
Especially, removing all useless leaves, we have $\deg(s_1)=\deg(s_{n'})=2$.
\qed
\end{proof}

Hereafter, we only consider the caterpillars stated in Lemma \ref{lem:cat}.
That is, for any given caterpillar $G=(S\cup L,E)$ with spine $(s_1,\ldots,s_{n'})$,
we assume that $\deg(s_1)=\deg(s_{n'})=2$ and $2\le \deg(s_i)\le 3$ for each $1<i<n'$.
Then, we denote the unique leaf of $s_i$ by $\ell_i$ if it exists.

We here introduce a key notion of the problem on these caterpillars that is named {\em locked} path.
Let $G$ and $\bfI$ be a caterpillar and an independent set of $G$, respectively.
A path $P=(p_1,p_2,\ldots,p_k)$ on $G$ is {\em locked} by $\bfI$ if and only if
\begin{listing}{aaa}
\item[(a)] $k$ is odd and greater than 2,
\item[(b)] $\bfI \cap P=\{p_1,p_3,p_5,\ldots,p_k\}$,
\item[(c)] $\deg(p_1)=\deg(p_k)=1$ (in other words, they are leaves), and
	   $\deg(p_3)=\deg(p_5)=\cdots =\deg(p_{k-2})=2$.
\end{listing}
This notion is simplified version of a {\em locked} tree used in \cite{DDFEHIOOUY2015}.
Using the discussion in \cite{DDFEHIOOUY2015}, 
we obtain the condition for the immovable independent set on a caterpillar:
\begin{theorem}[\cite{DDFEHIOOUY2015}]
\label{th:cat-fix}
Let $G$ and $\bfI$ be a caterpillar and an independent set of $G$, respectively.
Then we cannot slide any token in $\bfI$ on $G$ at all
if and only if there exist a set of locked paths $P_1,\ldots,P_{h}$ for some $h$
such that $\bfI$ is a union of them.
\end{theorem}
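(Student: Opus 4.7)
The plan is to prove the two implications separately, beginning with the easier sufficiency and then tracing chains of blockers for the necessity.

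For the backward direction, suppose $\bfI = \bigcup_{j=1}^h V(P_j)$ for a collection of locked paths, and pick any $v \in \bfI$. I would fix a locked path $P_j = (p_1,\ldots,p_k)$ with $v = p_i$ for some odd $i$. If $i \in \{1,k\}$, then $v$ is a leaf of $G$ whose unique neighbor $p_2$ (respectively $p_{k-1}$) is adjacent to the token $p_3 \in \bfI$ (respectively $p_{k-2} \in \bfI$), so the only candidate slide of $v$ is blocked. If $1 < i < k$, condition (c) gives $\deg_G(p_i) = 2$, so the only two candidate targets $p_{i-1}, p_{i+1}$ are blocked by $p_{i-2}, p_{i+2} \in \bfI$ respectively. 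This is a routine case check.

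For the forward direction, suppose no token of $\bfI$ can be slid. The key rigidity observation is: if a spine vertex $s_i \in \bfI$ had its optional leaf $\ell_i$ attached, then $\ell_i \notin \bfI$ by independence, $\ell_i$ has no other neighbor than $s_i$, and the token on $s_i$ could be slid to $\ell_i$, a contradiction. Therefore every spine token has $G$-degree exactly $2$; combined with $\deg(s_1) = \deg(s_{n'}) = 2$ from Lemma~\ref{lem:cat}, this forces $s_1, s_{n'} \notin \bfI$. With this in hand, I would construct a locked path through each $v \in \bfI$ by tracing a \emph{blocker chain} outward. Starting at $v$, in each direction along the spine the chain alternately crosses a non-token spine vertex $s_j$ and lands on the next token: since the blocker requirement for $s_j$ forces either $s_{j\pm 1} \in \bfI$ (spine token, chain continues because it again has $\deg_G = 2$) or $\ell_j \in \bfI$ (leaf token, chain terminates). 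Termination in a leaf-token in finitely many steps is guaranteed because the spine is finite and the chain is forbidden from landing on $s_1$ or $s_{n'}$; concretely, once the chain would otherwise force a blocker for $s_2$ or $s_{n'-1}$, the only admissible choice is $\ell_2 \in \bfI$ or $\ell_{n'-1} \in \bfI$, so termination is unavoidable. Running the trace in both directions from a spine-token $v$, or in the single blocker-direction from a leaf-token $v$, produces a path from a leaf to a leaf whose odd-indexed vertices are exactly the encountered tokens and whose internal odd-indexed vertices have $G$-degree $2$; this is a locked path containing $v$.

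The main obstacle I expect is handling leaf-tokens whose stem $s_i$ has blockers on both spine sides, because then $\ell_i$ is an endpoint of two distinct traced locked paths and the collection is not a vertex-disjoint partition. The cleanest way to dispatch this is to note that the theorem only requires $\bfI$ to be the vertex union, not a partition, of the $P_j$; so one merely has to verify that every $v \in \bfI$ is contained in at least one traced path (immediate from the construction, since the trace starting at $v$ produces one). A secondary bookkeeping issue is confirming that the two orientations of trace from a spine-token are consistent and always yield a single leaf-to-leaf path rather than diverging, which follows because the chain direction is forced at each step once a side is chosen. With these observations, the collection of all traced locked paths covers $\bfI$ and satisfies the conditions (a)--(c) of the definition, completing the necessity.
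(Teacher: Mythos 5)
Your argument is correct, but note that the paper does not actually prove Theorem~\ref{th:cat-fix}: it cites \cite{DDFEHIOOUY2015}, where the statement is a special case of the ``locked tree'' characterization for trees, and the proof is explicitly omitted here. So your blind proof is a genuinely different route: a short case check for sufficiency, and for necessity an explicit blocker-chain trace carried out under the reduced-form assumptions the paper adopts just before the theorem (Lemma~\ref{lem:cat}: at most one leaf per spine vertex, $\deg(s_1)=\deg(s_{n'})=2$), which is exactly the setting in which the theorem is later invoked; the cited proof buys generality for all trees, while yours buys self-containedness and essentially re-derives the left-to-right $O(n)$ marking procedure the paper uses in the proof of Theorem~\ref{th:caterpillar}. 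Two small points to tidy. First, the hypothesis in the backward direction should be $\bfI=\bigcup_j(\bfI\cap P_j)$ rather than $\bfI=\bigcup_j V(P_j)$: the even-indexed vertices of a locked path are token-free by condition (b), so your set equation as written is unsatisfiable, although your actual argument only uses that every token lies on some locked path, which is the intended reading. Second, in the necessity direction you should say explicitly that the even-indexed vertices of a traced path are not in $\bfI$ (immediate, since each is adjacent to an odd-indexed token), so that condition (b) holds with equality. With those cosmetic fixes the trace is sound: a spine token cannot have a pendant leaf (else it could slide onto it), hence has degree $2$ and hence $s_1,s_{n'}\notin\bfI$; the forced blockers keep the chain alternating token/non-token along the spine; finiteness together with the exclusion of $s_1$ and $s_{n'}$ forces termination at leaf tokens on both ends, yielding a locked path through every token; and since the theorem asks only for a union, not a vertex-disjoint partition, overlapping traced paths are harmless.
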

The proof can be found in \cite{DDFEHIOOUY2015}, and omitted here.
Intuitively, for any caterpillar $G$ and its independent set $\bfI$,
if $\bfI$ contains a locked path $P$, we cannot slide any token through the vertices in $P$.
Therefore, $P$ splits $G$ into two subgraphs, and we obtain two completely separated subproblems.
(We note that the endpoints of $P$ are leaves with tokens, and their neighbors are spine vertices without
tokens. This property admits us to cut the graph at the spine vertices on the locked path.)
Therefore, we obtain the following lemma:
\begin{lemma}
\label{lem:cat2}
For any given caterpillar $G=(S\cup L,E)$ and two independent sets $\bfI_b$ and $\bfI_r$ on $G$,
there is a linear time reduction from them to another caterpillar 
$G'=(S'\cup L',E')$ and two independent sets $\bfI_b'$ and $\bfI_r'$ such that 
(1) $G$, $\bfI_b$, and $\bfI_r$ are a yes-instance of the {\sc sliding token} problem 
if and only if $G'$, $\bfI_b'$, and $\bfI_r$ are a yes-instance of the {\sc sliding token} problem, and
(2) both of $\bfI_b'$ and $\bfI_r'$ contain no locked path.
\end{lemma}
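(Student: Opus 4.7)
The overall idea is to iteratively locate locked paths contained in $\bfI_b$ or in $\bfI_r$ and excise them from the instance, using Theorem~\ref{th:cat-fix} to certify that the tokens on any locked path are immovable and, moreover, that their presence severs the spine so that the reconfiguration problem decomposes across the remaining pieces.

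First I would scan the spine $(s_1,\ldots,s_{n'})$ in a single left-to-right pass to enumerate all locked paths contained in $\bfI_b$, and analogously in $\bfI_r$. The defining pattern of a locked path is entirely local (a leaf endpoint carrying a token, an alternating run of degree-$2$ spine vertices with tokens at odd positions, and another leaf endpoint with a token), so this enumeration runs in $O(n)$ time.

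For each locked path $P$ found in $\bfI_b$, Theorem~\ref{th:cat-fix} applied to the tokens on $P$ guarantees that every token of $\bfI_b\cap P$ is immovable along any reconfiguration starting from $\bfI_b$. A necessary condition for $\bfI_b\vdash^*\bfI_r$ is therefore $\bfI_b\cap P=\bfI_r\cap P$; if this fails I simply output an explicit small no-instance (for example a path on three vertices with $\bfI_b'$ and $\bfI_r'$ chosen to be unreconfigurable), which trivially has no locked paths and vacuously satisfies~(1). The same check is applied symmetrically for each locked path detected in $\bfI_r$. When all such checks pass, I delete the entire vertex set of $P$, i.e.\ the two leaf endpoints $p_1,p_k$ together with the spine vertices $p_2,\ldots,p_{k-1}$, as well as any additional leaves attached to $p_2,p_4,\ldots,p_{k-1}$; a token on such a leaf is likewise immovable because its sole neighbour lies on $P$, so it must agree between $\bfI_b$ and $\bfI_r$ or else I again output the trivial no-instance. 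After deletion, the remaining graph is a disjoint union of at most two smaller caterpillars, on which I reapply the cleanup of Lemma~\ref{lem:cat} to restore the degree conventions.

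I repeat this reduction until no locked paths remain in either independent set. Because each vertex is scanned and then possibly removed only a constant number of times across all iterations, the total work is $O(n)$. The main obstacle is the correctness of the splitting step, namely verifying that no foreign token can ever be slid onto any spine vertex $p_2,\ldots,p_{k-1}$ of $P$ during a valid reconfiguration. This I would establish by a direct independence argument: each such $p_{2j}$ is permanently adjacent to an immovable locked token at $p_{2j-1}$ or $p_{2j+1}$, so placing any other token on $p_{2j}$ would violate independence, and the spine is effectively cut at $P$. Consequently, reconfiguration sequences on the original instance correspond bijectively to tuples of reconfiguration sequences on the resulting pieces, which gives~(1); property~(2) then holds by construction at termination.
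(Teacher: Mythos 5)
Your proposal follows essentially the same route as the paper's proof: locate each locked path, use the rigidity of its tokens (which the paper likewise draws from Theorem~\ref{th:cat-fix} and the discussion in \cite{DDFEHIOOUY2015}) to conclude it must appear identically in $\bfI_b$ and $\bfI_r$ or the answer is ``no,'' delete its vertices, and treat the resulting (possibly disconnected) caterpillar as independent subinstances. The extra details you supply --- the explicit argument that no foreign token can ever occupy $p_2,\ldots,p_{k-1}$, the treatment of leaves hanging off the token-free positions, and emitting a trivial no-instance to preserve the reduction format --- merely flesh out what the paper delegates to the cited reference, so the approaches coincide.
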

\begin{proof}
In $G$, when $\bfI_b$ contains a locked path $P$, 
it should be appear in $\bfI_r$; otherwise, the answer is no.
Therefore, we can remove all vertices in $P$ and obtain the new graph 
$G''$ with two independent sets $\bfI_b'':=\bfI_b\setminus P$ and $\bfI_r:=\bfI_r\setminus P$ such that 
$G$ with $\bfI_b$ and $\bfI_r$ is a yes-instance if and only if 
$G''$ with $\bfI_b''$ and $\bfI_r''$ is a yes-instance.
Repeating this process, we obtain disconnected caterpillar $\hat{G}$ and two independent sets $\hat{\bfI_b}$ and 
$\hat{\bfI_r}$ such that both of $\hat{\bfI_b}$ and $\hat{\bfI_r}$ contain no locked paths.
On a disconnected graph, we can solve the problem separately for each connected component.
Therefore, we can assume that the graph is connected, which completes the proof.
\qed
\end{proof}

Hereafter, without loss of generality, 
we assume that the caterpillar $G$ with two independent sets $\bfI_b$ and $\bfI_r$ satisfies the conditions 
in Lemmas \ref{lem:cat} and \ref{lem:cat2}.
That is, each spine vertex $s_i$ has at most one leaf $\ell_i$, 
$s_1$ and $s_{n'}$ have one leaf $\ell_1$ and $\ell_{n'}$, respectively,
both of $\bfI_b$ and $\bfI_r$ contain no locked path, and $\msize{\bfI_b}=\msize{\bfI_r}$.
By the result in \cite{DDFEHIOOUY2015}, this is a yes-instance.
Thus, it is sufficient to show an $O(n^2)$ time algorithm that 
computes a shortest reconfiguration sequence between $\bfI_b$ and $\bfI_r$.

Each pair $(s_i,\ell_i)$ can have at most one token.
Therefore, without loss of generality, we can assume that the blue vertices $b_1, b_2, \ldots, b_k$ in $\bfI_b$
are labeled from left to right (according to the order $(s_1,\ell_1)$, $(s_2,\ell_2)$, $\ldots$, $(s_{n'},\ell_{n'})$ of $G$), 
that is, $L(b_i) < L(b_j)$ if $i < j$; similarly, 
the red vertices $r_1, r_2, \ldots, r_k$ are also labeled from left to right. 
Then, we define a target-assignment $\fallmap: \bfI_b \to \bfI_r$, as follows:
for each blue vertex $b_i \in \bfI_b$
\begin{equation}
\label{eq:map_caterpillar}
 	\fallmap(b_i) = r_i.
\end{equation}
To prove Theorem~\ref{th:caterpillar}, it suffices to show that $\fallmap$ is proper, 
and we can slide tokens with fewest detours.
Here, any token cannot bypass the other token since each token is on a leaf or spine vertex.
Thus, by the results in \cite{DDFEHIOOUY2015}, it has been shown that $\fallmap$ is proper.
We show that we can compute a shortest reconfiguration in case analysis.

Now we introduce {\em direction} of a token $t$ denoted by $dir(t)$ as follows:
when $t$ slides from $v_i\in \{s_i,\ell_i\}$ in $\bfI_b$ to $v_j\in \{s_j,\ell_j\}$ in $\bfI_r$ with $i<j$,
the direction of $t$ is said to be {\em R} and denoted by $dir(t)=R$.
If $i>j$, it is said to be {\em L} and denoted by $dir(t)=L$.
If $i=j$, the direction of $t$ is said to be {\em C} and denoted by $dir(t)=C$.

\begin{figure}[t]
\begin{center}
\includegraphics[width=0.8\textwidth]{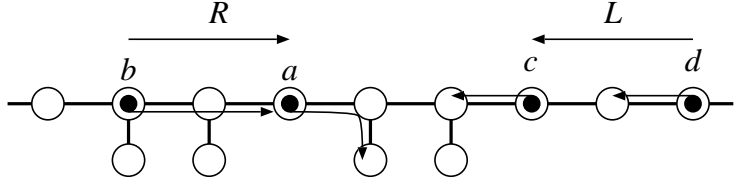}
\end{center}
%\vspace{-2em}
\caption{The most right R token $a$ has to precede the most left L token $c$.}
\label{fig:rl}
%\vspace{-1em}
\end{figure}

We first consider a simple case: all directions are either R or L.
In this case, we can use the same idea appearing in the algorithm for 
a proper interval graph in Section \ref{sec:proper}.
We can introduce a partial order over the tokens, 
and slide them straightforwardly using the same idea in 
Section \ref{subsec:algo_proper}.
Intuitively, a sequence of R tokens are slid from left to right,
and a sequence of L tokens are slid from right to left,
and we can define a partial order over the sequences of different directions.
The only additional considerable case is shown in \figurename~\ref{fig:rl}.
That is, when the token $a$ slides to $\ell_i$ from left and
the other token $c$ slides to $s_{i+1}$ from right, $a$ should precede $c$.
It is not difficult to see that this (and its symmetric case)  is the only exception 
than the algorithm in Section \ref{subsec:algo_proper} 
when all tokens slide to right or left.
In other words, in this case, detour is required, and unavoidable.

We next suppose that $\bfI_b$ (and hence $\bfI_r$) contains some token $t$ with $dir(t)=C$. 
In other words, $t$ is put on $s_i$ or $\ell_i$ for some $i$ in both of $\bfI_b$ and $\bfI_r$.
We have five cases.
\begin{listing}{bbb}
\item[Case (1):]
$t$ is put on $\ell_i$ in $\bfI_b$ and $\bfI_r$.
In this case, we have nothing to do; $t$ does not need to be slid.

\item[Case (2):]
$t$ is put on $s_i$ in $\bfI_b$ and slid to $\ell_i$ in  $\bfI_r$.
In this case, we first slide it from $s_i$ to $\ell_i$, and do nothing any more.
Then no detour is needed for $t$.

\item[Case (3):]
$t$ is put on $\ell_i$ in $\bfI_b$ and slid to $s_i$ in  $\bfI_r$.
In this case, we lastly slide it from $\ell_i$ to $s_i$, and 
no detour is needed for $t$ again.

\item[Case (4):]
$t$ is put on $s_i$ in $\bfI_b$ and $\bfI_r$, and $\ell_i$ exists.
Using a simple induction by the number of tokens, we can determine if $t$ should make a detour or not in linear time.
If not, we never slide $t$. Otherwise, we first slide $t$ to $\ell_i$, and lastly slide back from $\ell_i$ to $s_i$.
It is clear that the length of detour with respect to $t$ is as few as possible.

% \begin{figure}[t]
% \begin{center}
% \includegraphics[width=\textwidth]{rl.eps}
% \end{center}
% %\vspace{-2em}
% \caption{The most right R token $a$ has to precede the most left L token $c$.}
% \label{fig:rl}
% %\vspace{-1em}
% \end{figure}

\item[Case (5):]
$t$ is put on $s_i$ in $\bfI_b$ and $\bfI_r$, and $\ell_i$ does not exist.
By assumption, $1<s<n'$ (since $\ell_1$ and $\ell_{n'}$ exist).
% is it necessary?
Without loss of generality, we suppose $t$ is the leftmost spine vertex having the condition.
We first observe that $\msize{\bfI_b\cap\{s_{i-1},\ell_{i-1},s_{i+1},\ell_{i+1}\}}$ is at most 1.
Clearly, we have no token on $s_{i-1}$ and $s_{i+1}$. 
When we have two tokens on $\ell_{i-1}$ and $\ell_{i+1}$, the path 
$(\ell_{i-1},s_{i-1},s_i,s_{i+1},\ell_{i+1})$ is a locked path, which contradicts the assumption.
We also have $\msize{\bfI_r\cap\{s_{i-1},\ell_{i-1},s_{i+1},\ell_{i+1}\}}\le 1$ by the same argument.

Now we consider the most serious case since the other cases are simpler and easier than this case.
The most serious case is that a blue token on $\ell_{i-1}$ and a red token on $\ell_{i+1}$.
Since any token cannot bypass the other, $\bfI_b$ contains an L token on $\ell_{i-1}$, and 
$\bfI_r$ contains an L token on $\ell_{i+1}$.
In this case, by the L token on $\ell_{i-1}$, first, $t$ should make a detour to right,
and by the L token in $\bfI_r$, $t$ next should make a detour to left twice after the first detour.
It is clear that this three slides should not be avoided, and this ordering of three slides cannot be violated.
Therefore, $t$ itself should slide at least four times to return to the original position, 
and $t$ can done it in four slides.
During this slides, since $t$ is the leftmost spine with this condition,
the tokens on $s_1,\ell_1,s_2,\ell_2,\ldots,s_{i-1},\ell_{i-1}$ do not make any detours.
Thus we focus on the tokens on $s_{i+1},\ell_{i+1},\ldots$.
Let $t'$ be the token that should be on $\ell_{i+1}$ in $\bfI_r$.
Since $t$ is on $s_i$, $t'$ is not on $\{s_{i+1},\ell_{i+1}\}$.
If $t'$ is on one of $\ell_{i+2},s_{i+3},\ell_{i+3},s_{i+4},\ldots$ in $\bfI_b$, we have nothing to do; 
just make a detour for only $t$.
The problem occurs when $t'$ is on $s_{i+2}$ in $\bfI_b$.
If there exists $\ell_{i+2}$, we first slide $t'$ to it, and this detour for $t'$ is unavoidable.
If $\ell_{i+2}$ does not exist, we have to slide $t'$ to $s_{i+3}$ before slide of $t$.
This can be done immediately except the only considerable case; 
when we have another L or S token $t''$ on $s_{i+3}$.
We can repeat this process recursively and confirm that each detour is unavoidable.
Since $G$ with $\bfI_b$ and $\bfI_r$ contains no locked path, this process will halts.
(More precisely, this process will be stuck if and only if this sequence of tokens forms a locked path on $G$,
which contradicts the assumption.)
Therefore, traversing this process, we can construct the shortest reconfiguration sequence.
\end{listing}

\noindent
{\bf Proof of Theorem~\ref{th:caterpillar}.}
For a given independent set $\bfI_b$ on a caterpillar $G=(V,E)$,
we can check if each vertex is a part of locked path as follows in $O(n)$ time 
(which is much simpler than the algorithm in \cite{DDFEHIOOUY2015}):
\begin{listing}{aaa}
\item[(0)] Initialize a state $S$ by ``not locked path''.
\item[(1)] For $i=1,2,\ldots,n'$, check $s_i$ and $\ell_i$.
   We here denote their states by $(s_i,\ell_i)=(x,y)$, 
   where $x\in \{0,1\}$, $y\in \{0,1,-\}$ such that $1$ means ``token is placed on the vertex'',
   $0$ means ``no token is placed on the vertex'', and $-$ means ``the leaf does not exist.''
   In each case, update the state $S$ as follows:
   \begin{listing}{bbb}
    \item[Case $(0,1)$:] If $S$ is ``not locked path,'' set $S$ by ``locked path?,'' 
  	       and remember $i$ as a potential left endpoint of a locked path.
               If $S$ is ``locked path'', $(s_i,\ell_i)$ is a part of locked path.
	       Therefore, mark all vertices between the previously remembered 
	       left endpoint to this endpoint as ``locked path''.
	       After that, set $S$ by ``locked path?'' again, and remember $i$ as 
	       a potential left endpoint of the next locked path.
    \item[Case $(0,-)$ and $(0,0)$:] If there is a token on $s_{i-1}$, nothing to do. 
	       If there is no token on $s_{i-1}$, 
	       reset $S$ by ``not locked path'' (regardless of the previous state of $S$).
    \item[Case $(1,0)$:] reset $S$ by ``not locked path'' (regardless of the previous state of $S$).
    \item[Case $(1,-)$:] nothing to do. 
 \end{listing}
\end{listing}
Simple case analysis shows that after the procedure above, every vertex in 
a locked path is marked in $O(n)$ time.
Thus, we first run this procedure twice for $(G,\bfI_b)$ and $(G,\bfI_r)$ in $O(n)$ time, 
and check whether the marked vertices coincide with each other. 
If not, the algorithm outputs ``no''.  
Otherwise, the algorithm splits the caterpillar $G$ into 
subgraphs $G_1,G_2,\ldots,G_h$ induced by only unmarked vertices.
Then we can solve the problem for each subgraph; we note that 
two endpoints of which tokens are placed of a locked path $P$ are leaves.
That is, for example, when a locked path $P=(p_1,p_2,\ldots,p_k)$ splits $G$ into $G_1$ and $G_2$,
the neighbors of $G_1$ and $G_2$ in $P$ are $p_2$ and $p_{k-1}$, and there are no token on them.
Thus, in the case, we can solve the problem on $G_1$ and $G_2$ separately, and 
we do not need to consider their neighbors.

For each subgraph $G_1,\ldots,G_h$, the algorithm next checks whether 
each subgraph contains the same number of blue and red tokens.
If they do not coincide with each other, the algorithm output ``no.''% and outputs ``yes'' otherwise.
Otherwise, we have a yes-instance. 
The correctness of the algorithm so far follows from Theorem \ref{th:cat-fix} 
with results in \cite{DDFEHIOOUY2015} immediately.
It is also easy to implement the algorithm to run in $O(n)$ time and space.

It is not difficult to modify the algorithm to output the sequence itself based on the previous 
case analysis. For each token, the number of detours made by the token is bounded above by $O(n)$, 
the number of slides of the token itself is also bounded above by $O(n)$, 
and the computation for the token can be done in $O(n)$ time.
Therefore, the algorithm runs in $O(n^2)$ time, and the length of the sequence is $O(n^2)$.
(As shownn in the last paragraph in Section \ref{sec:proper},
there exist instances that require a shortest sequence of length $\Theta(n^2)$.)
% by Theorem \ref{th:cat-free},
% the algorithm may output the sequence of length $O(n^2)$.
% In the same manner of the proper interval graphs, we can construct an instance of the problem 
% that require a sequence of length $\Theta(n^2)$.
\qed

\section{Concluding Remarks}
% In this paper, we showed that the {\sc sliding token} problem can be solved 
% in $O(n)$ time for proper interval graphs and trivially perfect graphs, 
% where $n$ is the number of vertices in a graph.
% Furthermore, we gave polynomial-time algorithms that actually find 
% a minimum-length reconfiguration sequence between 
% two given independent sets $\bfI_b$ and $\bfI_r$ for these graph classes.
% It is worth to mentioning that each token is slid along the shortest path 
% between $b_i$ and $r_i$ for each $i$. 
% That is, these graph classes do not need to use ``detours'', and hence 
% we can bound the number of sliding by a polynomial of the number of vertices.
% In an interval graph, it is easy to construct an instance that requires detours.
% As we have shown in \figurename~\ref{fig:results}, 
% complexity statuses remain open for several well-known graph classes, 
% such as chordal graphs and trees. 
% We conjecture that {\sc sliding token} is PSPACE-complete for chordal graphs.

In this paper, we showed that the {\sc shortest sliding token} problem can be solved 
in polynomial time for three subclasses of interval graphs.
The computational complexity of the problem for chordal graphs, interval graphs, and trees are still open.
Especially, tree seems to be the next target.
We can decide if two independent sets are reconfigurable in linear time \cite{DDFEHIOOUY2015}, 
then can we find a shortest sequence for a yes-instance in polynomial time?
As in the 15-puzzle, finding a shortest sequence can be NP-hard.
For a tree, we do not know that the length can be bounded by any polynomial or not.
It is an interesting open question whether there is 
any instance on some graph classes whose reconfiguration sequence requires super-polynomial length.

\bibliographystyle{abbrv}

% \newpage
% \appendix

% 	\section{Omitted Proofs from Section~\ref{sec:proper}}

% 	\subsection{Proof of Lemma~\ref{lem:proper-simple}}

% 	\subsection{Proof of Proposition~\ref{prop:proper}}

% 	\section{Omitted Proofs from Section~\ref{sec:trivially}}

% 	\subsection{Proof of Lemma~\ref{lem:lca}}

% 	\subsection{Correctness of the algorithm in Section~\ref{subsec:algo_trivially}} 
% 	\label{subsec:correctness}

\end{document}